\newtheorem{definition}{Definition}[section]
\newtheorem{lemma}{Lemma}[section]
\newtheorem{theorem}{Theorem}[section]
\newtheorem{example}{Example}[section]
\newtheorem{remark}{Remark}[section]
\def\ie{{\it i.e.\ }}
\def\st{{\it s.t.\ }}
\newcommand{\F}{\mathcal {F}}
\newcommand{\C}{\mathcal {C}}
\renewcommand{\P}{\mathcal {P}}
\newcommand{\bbF}{{\mathbb F}}
\begin{document}

\title{Repairing Generalized Reed-Muller Codes}

\author{Tingting Chen,~and~Xiande~Zhang
\thanks{T. Chen ({\tt ttchenxu@mail.ustc.edu.cn}) and X. Zhang ({\tt drzhangx@ustc.edu.cn}) are with School of Mathematical Sciences,
University of Science and Technology of China, Hefei, 230026, Anhui, China.  }
}
%
\maketitle

\begin{abstract}
In distributed storage systems, both the repair bandwidth and locality are important repair cost metrics to evaluate the performance of a  storage code. Recently, Guruswami and Wooters proposed an optimal linear repair scheme based on  Reed-Solomon codes for a single failure, improved the bandwidth of the classical repair scheme. In this paper, we consider the repair bandwidth of Generalized Reed-Muller (GRM) codes, which have good locality property. We generalize Guruswami and Wooters' repairing scheme to GRM codes for single failure, which has nontrivial bandwidth closing to the lower bound when the subfield is small. We further extend the repair scheme for multiple failures in distributed and centralized repair models, and compute the expectation of bandwidth by considering different erasure-patterns.
\end{abstract}

\begin{IEEEkeywords}
\boldmath Distributed storage system, Reed-Solomon codes, Generalized Reed-Muller Codes, Repair Bandwidth, Multiple Erasures.
\end{IEEEkeywords}

\section{Introduction}

In the erasure-coded distributed storage system, a large file is encoded and stored over many nodes.  When some nodes occasionally fail, one could be able to set up  replacement nodes and reconstruct the failed data  efficiently by using information from some surviving nodes. The problem of  recovering the failed nodes exactly, known as the {\it exact repair problem},  was first introduced in \cite{dimakis2010network}.

The \emph{repair bandwidth} is an important performance metric of  distributed storage systems, which is the total amount of data downloaded from the surviving nodes by replacement nodes in order to recover the failed nodes. In the classical repair scheme, one uses a \emph{Maximum Distance Separable} (MDS) code,  where a message of length $k$ is encoded into $n$ symbols, in such a way that any $k$ symbols determine the message. By distributing $n$ symbols across $n$ nodes, this gives a distributed storage scheme which can tolerate $n -k$ node failures. When a node fails, the naive MDS repair scheme would involve downloading $k$ complete symbols of $n$.  But this is wasteful: we have to read $k$ symbols even if we only want one. This poor performance in repairing failed nodes of
MDS codes motivated the wide study of repair-efficient codes
such as regenerating codes \cite{dimakis2010network,dimakis2011survey},  and locally repairable
codes \cite{oggier2011self,gopalan2012locality,papailiopoulos2014locally}.
For regenerating codes,  we only need to download part of information from $d$ $(>k)$ surviving nodes and simultaneously reduce the bandwidth \cite{dimakis2010network}. There is a trade-off between storage and
repair bandwidth for the regenerating code. The two extreme cases are called minimum bandwidth regenerating (MBR) codes  and minimum
storage regenerating (MSR) codes \cite{dimakis2011survey}.

The Reed-Solomon (RS) codes \cite{reed1960polynomial} are MSR codes, which have been extensively studied and widely used in practice. Guruswami and Wooters \cite{guruswami2017repairing} recently proposed a linear repair scheme based on RS codes (over a field $ F$) for a single failure. The key idea is to repair the failed node over a subfield $B\subseteq F$, where $[ F:B]=t$ by collecting $t$ values of trace functions from the surviving nodes. By carefully choosing the trace values as  dependent as possible, the repair bandwidth can be greatly reduced. In \cite{dau2018repairing}, the authors applied the same idea to construct linear repair schemes for RS codes for multiple erasures, and introduced two repair models: centralized models and distributed models. Focusing on centralized models, Mardia et al. \cite{mardia2018repairing} extended Guruswami-Wooters scheme to multiple failures by constructing the equivalent repair matrices for the set of failed nodes.

\subsection{Motivation}
The construction of Guruswami-Wooters scheme is indeed  a polynomial interpolation problem
which arises from the use of Reed-Solomon codes in distributed storage systems. It is in fact applicable to any code that can be described by polynomials over a field. Jin et al. studied a similar problem for  algebraic geometry codes \cite{jin2018repairing}, where each codeword is a function satisfying some geometric restriction. Motivated by these works, we consider the repair schemes for other polynomial codes.

Reed-Muller (RM) codes are  among the oldest known  practical codes which are defined by polynomials, and with inherent locality that is required in efficient distributed storage systems. They were discovered by Muller and provided with a decoding algorithm by Reed in 1954. More recently,
Reed Muller codes were proved to be capacity achieving
over the binary and block erasure channels \cite{kudekar2017reed,kudekar2015reed}, and the closely related Polar codes are used in the proposed 5G standard.
The Generalized Reed-Muller (GRM) codes \cite{kasami1968new,delsarte1970generalized} can be seen as a generalization of RS codes from univariate polynomials to multivariate polynomials, which are also known as   subcodes of RS codes \cite{pellikaan2004list}.
 A class of GRM codes has been suggested for use in power-controlled orthogonal frequency-division multiplexing (OFDM) modulation \cite{paterson2000efficient, paterson2000generalized}, most of which are the first-order and second-order GRM codes.

The {\it locality} of GRM codes  is closely related to their geometrical and
nesting properties due to their algebraic structure \cite{kasami1968new,delsarte1970generalized}. This local property
has been leveraged recently for coded based unconditionally
secure protocols considered in theoretical computer science
and cryptography communities \cite{yekhanin2012locally}. Locality feature describes the ability of retrieving a particular symbol of a coded
message by looking only at $r<k$ positions of its encoding,
where $r$ is known as the locality parameter and $k$ denotes the
dimension of the code. This  locality feature is widely required for the  efficient distributed storage systems, see for example \cite{papailiopoulos2014locally,rawat2012locality,gopalan2012locality,rawat2015cooperative,chung2006maximizing,silberstein2018locality}.


\subsection{Our results}
In this paper, we propose several repairing schemes for GRM codes that can recover single erasure and multiple erasures.

For the case of one erasure, we extend the Guruswami-Wootters repair scheme by replacing the trace function with subspace polynomials. We also give a lower bound of the linear repair bandwidth  for single failure,  which  closes to the repair bandwidth in our construction when the base field is small under certain conditions.

 For the case of multiple erasures, say $l$ erasures, we give two different schemes, one for distributed model and one for centralized model.  In both models, each node is identified with an $m$-tuple. We partition the $l$ failed nodes into disjoint groups, where  nodes in each group have $m-1$ same coordinates. The repair bandwidth of our constructions depends on the partition of the failed nodes.  In particular,  given a $GRM(\mu,m)$ over $\mathbb{F}_q$ with $q=p^t$, and some integer $s$ satisfying $p^t-p^{s+1}\leq \mu \leq p^t-p^s-1$, then for all $ l\leq p^t-p^s-\mu-1$,
  \begin{itemize}
    \item[a)] when $l$ erasures are divided into $l$ groups, we have the worst repair bandwidth $l(q-1)(t-s)$ for both distributed and centralized schemes;
    \item[b)] when $l$ erasures belong to the same group, we have the minimal repair bandwidth $l(q-l)(t-s)$ for the distributed scheme,  and $(q-l)(t-s)$ for the centralized scheme.
  \end{itemize}
Further, we compute the expectation of repair bandwidth for $l=2$ and $3$, which tends to $l(q-1)(t-s)$ for both the distributed and centralized schemes.
%

\subsection{Organization}
The paper is organized as follows. In Section \ref{pre}, we give some necessary definitions and notations, and review the Guruswami-Wootters repair scheme. In Section \ref{fac}, we propose a repair scheme for GRM codes about recovering one erasure and give a lower bound for repair bandwidth. Two repair schemes for multiple erasures are provided in Section \ref{scheme},  where the expectations of the repair bandwidth are also computed.  We conclude our results in Section \ref{con} by discussing some open problems.

\section{Preliminaries}\label{pre}
We first introduce relevant notation and definitions
used in all subsequent derivations, and then proceed to review
the repair scheme proposed
by Guruswami and Wootters \cite{guruswami2017repairing} for repairing a single
node failure in RS codes.

\subsection{Notations}

For any integers $a<b$, the set of integers $\{a,a+1,\cdots,b\}$ is abbreviated as $[a,b]$. We further abbreviate $[1,b]$  as $[b]$. Let $F^m$ be the $m$-dimensional vector space over $F$, then a vector $\alpha\in F^m$ is written as $\alpha=(\alpha_1,\alpha_2,\ldots,\alpha_m)$, where $\alpha_i$ is the $i$th entry of $\alpha$. Let $I=\{i_1,i_2,\cdots,i_s\}\subseteq [m]$ with $i_1<i_2<\cdots<i_s$, then $\alpha_I$ means the sub vector $(\alpha_{i_1},\alpha_{i_2},\ldots,\alpha_{i_s})$ restricted on the positions of $I$. For two vectors $u=(u_1,u_2,\cdots,u_m)$ and $v=(v_1,v_2,\cdots,v_m)$, let $\langle u,v\rangle=\sum_{i=1}^mu_iv_i$ denote the standard inner product between them.

Let $q=p^m$ for some prime $p$ and integer $m \geq 1$, denote $B = \bbF_{q}$  the finite field of $q$ elements.
Let $F = \bbF_{q^{t}}$ be a field extension of $B$, where $t \geq 1$. The field {\it trace} from $F$ to $B$ is defined as
      \[\text{Tr}_{F/B}(\alpha)=\alpha+\alpha^q+\cdots+\alpha^{q^{t-1}}\]
for all $\alpha\in F$. The subscript $F/B$ is always omitted when the field $B$ and its extension $F$ are both clear.

It's known that $F$ can be treated as a
vector space of dimension $t$ over $B$, i.e. $F\cong B^t$, and hence
each symbol in $F$ can be represented as a vector of length $t$
over $B$.  Then the Tr function is a linear transformation from $F$ to $B$, when both $F$ and $B$ are viewed as linear spaces over $B$.


\subsection{RS codes and GRM codes}\label{GRM}

   A {\it linear $[n,k]$ code} $\C$ is a  subspace of $F^n$ of dimension $k$. The elements of $\C$ are called  {\it codewords}. For each codeword $c=(c_1,c_2,\cdots,c_n)\in \C$,  its {\it support} is defined as supp$(c)=\{i\in[n],c_i\neq 0\}$,  and its {\it Hamming weight}  is defined as wt$(c)=|\text{supp}(c)|$, i.e.\,the number of nonzero coordinates. For any $c_1,c_2\in\C$,  the {\it Hamming distance} between them is $d(c_1,c_2)=$ wt$(c_1-c_2)$, and the {\it minimum Hamming distance} of $\C$ is the minimum Hamming distance between any two distinct codewords of $\C$. If a code $\C$ has minimum distance $d$, then we say it is an $[n,k,d]$ code. Let $\C^\bot$ be the dual code of $\C$, \ie $\C^\bot=\{x\in F^n:\langle x,c\rangle=0\; \text{for all}\;c\in\C\}$, from which we know that $\C^\bot$ is a linear $[n,n-k]$ code. For every $[n,k]$ linear code, the Singleton bound tells that $d\leq n-k+1$,  and the code that has minimum distance achieving  this bound is called a {\it Maximum distance separable} (MDS) code.

      \begin{definition}
     Let $F[x]$ be the polynomial ring over $F$, the {\it Reed-Solomon code} $RS(A,k)\subset F^n$ with evaluation points $A=\{\alpha_1,\alpha_2,\cdots,\alpha_n\}\subseteq F$ and dimension $k$ is defined as:
     \[RS(A,k)=\{eval(f)_A:f\in F[x],\deg(f)<k\},\]
     where $eval(f)_A=(f(\alpha_1),f(\alpha_2),\cdots,f(\alpha_n))$. We always view $A$ as a set with some order.
   \end{definition}

A {\it generalized} Reed-Solomon code, $GRS ( A , k , \lambda) $, where $\lambda =(\lambda_1, \ldots , \lambda_n) \in F$, is defined similarly to a Reed-Solomon
code, except that the codeword corresponding to a polynomial
$f$ is now defined as $(\lambda_1f(\alpha_1), \lambda_2f(\alpha_2),\cdots, \lambda_nf(\alpha_n))$, $\lambda_i\neq 0$ for all $i\in[n]$.
It is well known that an $RS(A,k)$ is an MDS code of dimension $k$. Its dual code is a generalized RS code $GRS (A , n-k , \lambda)$,
for some multiplier vector $\lambda$ (see \cite{huffman2010fundamentals}).

   \begin{definition}
   Given an integer $r$ satisfying  $0\leq r\leq m$,  an {\it $r$-th order binary Reed-Muller code} $RM(r,m)\subset \bbF_2^n$ of length $n=2^m$, is the set of all vectors
   $eval(f)_A$, where $A=\bbF_{2}^m$ and $f\in \bbF_2[x_1,\ldots,x_m]$ is  a polynomial of degree at most $r$.
   \end{definition}

The following result is well known for binary Reed-Muller codes.

   \begin{theorem}\label{rm}\cite{huffman2010fundamentals}
    Let $r$ be an integer with $0\leq r\leq m$. Then $RM(r,m)$ is a linear code with dimension $k=\sum_{i=0}^r{m\choose i}$ and minimum distance $d=2^{m-r}$. Further, $RM(r,m)^\bot=RM(m-r-1,m)$ if $r<m$.
   \end{theorem}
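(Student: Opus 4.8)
The plan is to prove the three standard facts about binary Reed--Muller codes --- the dimension formula, the minimum distance, and the duality relation --- by exploiting the evaluation description of $RM(r,m)$ together with the structure of the monomial basis over $\bbF_2$.

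\textbf{Dimension.} First I would observe that over $\bbF_2$, since $x_i^2 = x_i$ as functions on $\bbF_2^m$, every polynomial function is represented by a unique multilinear (square-free) polynomial. The monomials $\prod_{i\in S} x_i$ for $S\subseteq[m]$ with $|S|\le r$ therefore give linearly independent functions on $A = \bbF_2^m$, and they span exactly the degree-$\le r$ functions. Counting them yields $k = \sum_{i=0}^r \binom{m}{i}$. The only thing to verify carefully is the linear independence of these evaluation vectors, which follows because the $2^m$ full set of multilinear monomials forms a basis for all functions $\bbF_2^m\to\bbF_2$ (a dimension count: there are $2^m$ such monomials and $2^m$ functions, and they span).

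\textbf{Minimum distance.} I would show $d = 2^{m-r}$ by exhibiting a minimum-weight codeword and bounding the weight from below. A codeword of weight exactly $2^{m-r}$ comes from the monomial $x_1 x_2\cdots x_r$, whose evaluation is nonzero precisely on the $2^{m-r}$ points with $\alpha_1=\cdots=\alpha_r=1$. For the lower bound $\mathrm{wt}(f)\ge 2^{m-r}$ for every nonzero $f$ of degree $\le r$, the clean approach is induction on $m$: writing $f(x_1,\dots,x_m) = g(x_1,\dots,x_{m-1}) + x_m\, h(x_1,\dots,x_{m-1})$ with $\deg g\le r$ and $\deg h\le r-1$, one splits the evaluation domain according to $x_m=0$ and $x_m=1$ and applies the inductive hypothesis to $h$ (if $h\ne 0$) or to $g$ (if $h=0$), summing the weight contributions from the two halves.

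\textbf{Duality.} For the dual statement when $r<m$, the plan is to show $RM(m-r-1,m)\subseteq RM(r,m)^\bot$ and then match dimensions. For the containment, take $f$ of degree $\le r$ and $g$ of degree $\le m-r-1$; then $\deg(fg)\le m-1$, and the inner product $\langle eval(f), eval(g)\rangle = \sum_{\alpha\in\bbF_2^m} (fg)(\alpha)$ is the sum of a degree-$\le(m-1)$ function over the whole cube. The key lemma is that $\sum_{\alpha\in\bbF_2^m}\alpha^S = 0$ whenever the multilinear monomial $\alpha^S$ omits at least one variable (i.e.\ $|S|<m$), because summing over that free variable's two values cancels in characteristic $2$; only the top monomial $x_1\cdots x_m$ sums to $1$. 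Since every monomial in $fg$ has degree $<m$, the inner product vanishes, giving the containment. Finally I would check the dimensions agree: by Theorem \ref{rm}'s dimension formula the two codes have dimensions summing to $\sum_{i=0}^r\binom{m}{i} + \sum_{i=0}^{m-r-1}\binom{m}{i} = 2^m = n$, using $\binom{m}{i}=\binom{m}{m-i}$, so the containment is an equality.

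\textbf{Main obstacle.} I expect the minimum-distance lower bound to be the most delicate part, since it requires the full induction rather than a one-line argument; the dimension and duality parts reduce to the single computational fact about summing multilinear monomials over $\bbF_2^m$ and a binomial-coefficient identity.
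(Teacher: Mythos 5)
The paper gives no proof of this statement at all: it is imported verbatim, with citation, from Huffman--Pless \cite{huffman2010fundamentals}, so the only meaningful comparison is with the standard textbook argument --- and that is exactly what you reconstruct. All three steps of your plan are correct: the multilinear-reduction count for the dimension, the decomposition $f = g + x_m h$ for the minimum distance (this is precisely the polynomial form of the $(u \mid u+v)$ construction the reference uses), and containment plus dimension count for duality, resting on the lemma that $\sum_{\alpha\in\bbF_2^m}\prod_{i\in S}\alpha_i = 0$ whenever $|S|<m$. Two spots need one extra line when written out in full. First, in the distance induction with $h\neq 0$, the two halves of the cube contribute $\mathrm{wt}(g)$ and $\mathrm{wt}(g+h)$, not $\mathrm{wt}(h)$; you must insert the triangle inequality $\mathrm{wt}(g)+\mathrm{wt}(g+h)\ge \mathrm{wt}\bigl(g+(g+h)\bigr)=\mathrm{wt}(h)$ before the inductive bound $\mathrm{wt}(h)\ge 2^{(m-1)-(r-1)}=2^{m-r}$ can be invoked; your phrase ``summing the weight contributions from the two halves'' gestures at this but does not state it. Second, in the case $h=0$ the inductive hypothesis in $m-1$ variables needs $\deg g\le m-1$, so either treat $r=m$ separately (where $d=1$ is trivial, since $RM(m,m)$ is the whole space) or note that after multilinear reduction $g$ automatically has degree at most $m-1$. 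Both are one-line fixes; the proposal is otherwise complete and is the standard proof.
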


Now we define a generalization of Reed-Muller codes introduced in \cite{delsarte1970generalized}.

   \begin{definition}
     Let $\bbF_q[x_1,x_2,\cdots,x_m]$ be the ring of $m$-variate polynomials over $\bbF_q$. The {\it generalized} Reed-Muller code $GRM(\mu,m)\subseteq\bbF_q^n$ with $n=q^m$ is defined as the set
     $$\{eval(f)_{\bbF_q^{m}}:f\in\bbF_q[x_1,x_2,\cdots,x_m],\deg(f)\leq \mu\}.$$
   \end{definition}

A generalized Reed-Muller code can be seen as a generalization of RS code, namely, pick the evaluation point set $A=\bbF_q^m$ and replace $F[x]$ with $\bbF_q[x_1,x_2,\cdots,x_m]$. Specifically, a generalized RM code $GRM(\mu,m)$ is defined of polynomials from $\bbF_q[x_1,x_2,\cdots,x_m]$ of degree at most $\mu$. Since $x_i^q=x_i$ in $\bbF_q$, we get $\mu\leq m(q-1)$.

 The minimum distance of $GRM(\mu,m)$ is given by  \cite{delsarte1970generalized}
 \begin{equation}\label{dis}
   d=(q-\theta)q^{m-u-1},
 \end{equation}
 where $\mu=u(q-1)+\theta$ with $0\leq \theta <q-1$.  By \cite{pellikaan2004list}, a generalized Reed-Muller code can be embedded into a  Reed-Solomon code.

   \begin{theorem} \cite{pellikaan2004list}\label{subfieldcode}
     Let $n=q^m$ and  $k=n-d+1$, where $d$ is the minimum distance of a $GRM(\mu,m)$. Then $GRM(\mu,m)$ is a subcode of $RS(A,k)\subset \bbF_{q^m}^n$ with $A=\bbF_{q^m}$, \ie
     $$GRM(\mu,m)\subseteq (RS(A,k)\cap\bbF_q^n).$$
   \end{theorem}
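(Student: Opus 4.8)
The plan is to build an explicit dictionary between the evaluation points $\bbF_q^m$ of the GRM code and the evaluation points $A=\bbF_{q^m}$ of the RS code, and to convert each multivariate polynomial $f$ into a univariate one by substitution. First I would fix an $\bbF_q$-basis $e_1,\dots,e_m$ of $\bbF_{q^m}$ together with its trace-dual basis $e_1^{\ast},\dots,e_m^{\ast}$ (which exists since the extension is separable, so the trace form is nondegenerate), and define the bijection $\phi:\bbF_q^m\to\bbF_{q^m}$, $\phi(\alpha)=\sum_{i=1}^m\alpha_ie_i$. Ordering $A=\bbF_{q^m}$ so that its $j$-th element is $\phi$ of the $j$-th element of $\bbF_q^m$ puts both codes in the same coordinate space. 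The coordinate $\alpha_i$ is recovered from $\beta=\phi(\alpha)$ by the linearized polynomial $L_i(x)=\text{Tr}_{\bbF_{q^m}/\bbF_q}(e_i^{\ast}x)=\sum_{s=0}^{m-1}(e_i^{\ast})^{q^s}x^{q^s}$, since $L_i(\phi(\alpha))=\sum_j\alpha_j\text{Tr}(e_i^{\ast}e_j)=\alpha_i$. Given $f$ with $\deg f\le\mu$, set $g(x)=f(L_1(x),\dots,L_m(x))\in\bbF_{q^m}[x]$; then $g(\phi(\alpha))=f(\alpha)$ for every $\alpha$, so $eval(g)_A$ equals the GRM codeword $eval(f)_{\bbF_q^m}$ under the chosen ordering. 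Because each $L_i(\beta)\in\bbF_q$, all values lie in $\bbF_q$, which will give membership in $\cap\,\bbF_q^n$ for free.

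What remains is to show that, as a function on $\bbF_{q^m}=A$, $g$ is represented by a polynomial of degree $<k=n-d+1$; equivalently that the unique representative $\bar g$ of degree $<q^m$ (obtained by reducing modulo $x^{q^m}-x$) has $\deg\bar g\le n-d$. I may assume $f$ is reduced, i.e.\ every exponent is at most $q-1$, since reducing $x_i^q\to x_i$ changes neither the function nor the bound $\deg f\le\mu$. A monomial $x_1^{a_1}\cdots x_m^{a_m}$ with $\sum_i a_i=w\le\mu$ maps under the substitution to a sum of monomials $x^E$, where each $E=\sum_{s=0}^{m-1}n_sq^s$ arises by choosing one power $x^{q^s}$ from each of the $w$ linear factors; hence $n_s\ge0$ and $\sum_s n_s=w\le\mu$.

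The crux is the degree of $\bar g$, and the right bookkeeping is the base-$q$ digit sum. Reducing $x^E$ modulo $x^{q^m}-x$ amounts to normalizing the (generalized) base-$q$ representation $E=\sum_s n_sq^s$ by cyclic carries $q\cdot q^s\equiv q^{s+1}$ (with $q\cdot q^{m-1}=q^m\equiv1=q^0$), each of which replaces a value $q$ in one position by $1$ in the next and therefore decreases the digit sum by $q-1$. Consequently the exponent $r$ of the resulting reduced monomial has base-$q$ digit sum at most $w\le\mu$. Writing $\mu=u(q-1)+\theta$ with $0\le\theta<q-1$, a greedy argument shows that the largest integer in $[0,q^m-1]$ whose base-$q$ digit sum is at most $\mu$ is obtained by filling the top $u$ digits with $q-1$ and the next digit with $\theta$, giving $\sum_{s=m-u}^{m-1}(q-1)q^s+\theta q^{m-u-1}=q^m-(q-\theta)q^{m-u-1}=n-d$ by \eqref{dis}. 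Hence every reduced exponent is at most $n-d$, so $\deg\bar g\le n-d=k-1<k$.

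Putting these together, $eval(\bar g)_A\in RS(A,k)$ and equals the original GRM codeword (which takes values in $\bbF_q$), proving $GRM(\mu,m)\subseteq RS(A,k)\cap\bbF_q^n$. I expect the main obstacle to be exactly this digit-sum computation of the reduced degree: the naive bound $\deg g\le\mu q^{m-1}$ coming directly from the substitution already exceeds $n-d$ once $u\ge1$ and $\theta>0$, so the argument genuinely relies on the facts that reduction modulo $x^{q^m}-x$ never increases the base-$q$ digit sum and that $n-d=\max\{N<q^m:\ \mathrm{digitsum}(N)\le\mu\}$. The only edge case (when $E$ is a positive multiple of $q^m-1$) forces $\mu=m(q-1)$, where $n-d=q^m-1$ and the bound is trivially tight.
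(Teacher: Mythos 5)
The paper never proves this theorem --- it is imported verbatim from Pellikaan and Wu \cite{pellikaan2004list} --- so there is no in-paper argument to compare against; your proposal therefore has to stand on its own, and it does: it is a correct, self-contained proof, and in substance it reconstructs the standard argument behind the cited result. The identification step is sound: with $\phi(\alpha)=\sum_{i}\alpha_i e_i$ and $L_i(x)=\text{Tr}_{\bbF_{q^m}/\bbF_q}(e_i^{\ast}x)=\sum_{s=0}^{m-1}(e_i^{\ast})^{q^s}x^{q^s}$, duality gives $L_i(\phi(\alpha))=\alpha_i$, so $g=f(L_1,\dots,L_m)$ evaluates on $A=\bbF_{q^m}$ exactly as $f$ does on $\bbF_q^m$, with all values in $\bbF_q$. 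The crux --- your degree bookkeeping --- is also right: a degree-$w$ monomial of $f$ contributes exponents $E=\sum_s n_s q^s$ with $\sum_s n_s=w\le\mu$; each cyclic carry preserves the function $x\mapsto x^E$ on $\bbF_{q^m}$ (it changes $E$ only by multiples of $q^m-1$ and, since every carry deposits a $1$ in some digit, the exponent stays positive, so agreement also holds at $x=0$) while lowering the digit sum by $q-1$; and the greedy maximum of integers in $[0,q^m-1]$ with digit sum at most $\mu=u(q-1)+\theta$ is $q^m-(q-\theta)q^{m-u-1}=n-d$ by Eq.~(\ref{dis}), so every reduced exponent is at most $n-d=k-1$. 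Your treatment of the edge case (exponents that are positive multiples of $q^m-1$ reduce to $q^m-1$, not $0$, forcing $\mu=m(q-1)$, where the bound is vacuous) is likewise correct. What your write-up buys, beyond the paper, is an explicit and checkable proof of a fact the paper only cites, with the two genuinely nontrivial lemmas (reduction modulo $x^{q^m}-x$ never increases the base-$q$ digit sum; $n-d$ is the largest integer below $q^m$ of digit sum at most $\mu$) isolated cleanly.
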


   It is well known \cite{delsarte1970generalized} that the dual code of a $GRM(\mu,m)$ is also a GRM code of the form $GRM(\mu^\bot,m)$, where $\mu^\bot=m(q-1)-\mu-1$.  By (\ref{dis}), the minimum distance of the dual code is
   \begin{equation}\label{dis1}
   d^\bot=(\theta+2)q^{u}
 \end{equation}
 where $\mu=u(q-1)+\theta$ with $0\leq \theta <q-1$.

   \subsection{A linear repair scheme for one erasure}
   Now we review the repair scheme  proposed by Guruswami and Wootters \cite{guruswami2017repairing}, which we write in a framework for general linear codes. The main idea is as follows. Let $F$ be a finite field extension of $B$ with $[F:B]=t$. Then two bases $\{\xi_1,\xi_2,\cdots,\xi_t\}$ and  $\{\eta_1,\eta_2,\cdots,\eta_t\}$ of $F$ over $B$ are called to be dual bases if for all $1\leq i, j\leq t$,
      \begin{equation*}
        \text{Tr}(\xi_i\eta_j)=\left\{\begin{array}{lr}
          1\;\text{~~if}\;i=j,&\\
          0\;\text{~~if}\;i\neq j.
        \end{array}
        \right.
      \end{equation*}
For any basis $\{\xi_1,\xi_2,\cdots,\xi_t\}$, there exists a unique dual basis $\{\eta_1,\eta_2,\cdots,\eta_t\}$ \cite{LidlNied1997}. For any element $c\in F$, it has a unique representation as  $c=\sum_{j=1}^t\text{Tr}(\xi_jc)\eta_j$. Hence to recover $c$, we only need to find $t$ values Tr$(\xi_jc)$ for some basis $\{\xi_1,\xi_2,\cdots,\xi_t\}$.

 Suppose we have a linear $[n,k,d]$ code $\C\subset F^n$ and a codeword $c=(c_1,c_2,\ldots,c_n)$ with the $i$th element $c_i$ erased. Now we want to recover $c_i$. The trivial repair method is to find a codeword $\overline{c}=(\overline{c}_1,\overline{c}_2,\ldots,\overline{c}_n)$ in $C^{\bot}$ with a nonzero $\overline{c}_i$. Then $c_i$ can be recovered by the following equation \[c_i\overline{c}_i=-\sum_{j\neq i}c_j\overline{c}_j.\] So we need to download at least  $\overline{d}-1$ elements $c_j$ from $c$, where $\overline{d}$ is the minimum distance of $C^{\bot}$. In this case, the  total repair bandwidth is at least $t(\overline{d}-1)$ bits in $B$.

In the trivial repair framework, we use one codeword in $C^{\bot}$ with minimum weight such that the $i$th element is nonzero. In Guruswami-Wootters repair scheme, we use $t$ codewords in $C^{\bot}$, $p_k=(p_{k1},p_{k2},\ldots,p_{kn})$, $k\in [t]$, such that the set of the $i$th coordinates of these  codewords $\{p_{1i},p_{2i},\ldots,p_{ti}\}$ forms a basis of $F$ over $B$. So we have $t$ equations $c_i p_{ki}=-\sum_{j\neq i}c_j p_{kj}$, $k\in [t]$. Applying trace function to both sides, we have
\begin{equation}\label{tr1}
 \text{Tr}(c_i p_{ki})=-\sum_{j\neq i}\text{Tr}(c_j p_{kj}),~~~k\in [t],
\end{equation}
  by the linearity of trace functions.

For $j\neq i$, suppose the set of $j$th elements, $\{p_{1j},p_{2j},\ldots,p_{tj}\}$ spans a subspace of dimension $d_j$, and with a basis say $\{p_{1j},p_{2j},\ldots,p_{d_jj}\}$. Then for each $k\in[t]$, $p_{kj}=\sum_{s\in [d_j]}b_{kjs} p_{sj}$, where $b_{kjs}\in B$. Substituting this formula to (\ref{tr1}) for each $k\in[t]$, we have
         \begin{equation*}
           \begin{split}
              \text{Tr}(c_i p_{ki})&=-\sum_{j\neq i}\text{Tr}(c_j\sum_{s\in [d_j]}b_{kjs} p_{sj}) \\
                &=-\sum_{j\neq i}\sum_{s\in[d_j]}b_{kjs}\text{Tr}(c_jp_{sj}).
            \end{split}
         \end{equation*}

So we need to download $d_j$ values $\text{Tr}(c_jp_{sj})$ in $B$, $s\in [d_j]$ from each node $j\neq i$ to get $\text{Tr}(c_i p_{ki})$ for all $k\in [t]$. Since $\{p_{1i},p_{2i},\ldots,p_{ti}\}$ forms a basis of $F$ over $B$, we can recover $c_i$ by a dual basis of $\{p_{1i},p_{2i},\ldots,p_{ti}\}$. In this repairing scheme, the total repair bandwidth is  $\sum_{j\neq i}d_j$ bits in $B$.

\subsection{Repair scheme for polynomial codes}

In \cite{guruswami2017repairing}, the authors gave a repair scheme for RS$(F,k)$ of length $n=|F|=|B|^t$, where $n-k\geq |B|^{t-1}$. Each codeword is represented by a polynomial $f\in F[x]$ with $\deg(f)\leq k-1$, and each codeword in the dual code is associated with a polynomial $p\in F[x]$ with $\deg(p)\leq n-k-1$ and some multiplier vector $\lambda =(\lambda_1, \ldots , \lambda_n) \in F$ with $\lambda_i\neq 0$ for all $i\in[n]$ (Note that when $n=|F|$, we have $\lambda_i=1$ for all $i\in[n]$).  The nodes are named by elements of $F$. For any single erasure on the $i$th node $\alpha\in F$, the $t$ codewords in the dual are designed as \[p_{k,\alpha}(x)=\text{Tr}(u_k(x-\alpha))/(x-\alpha),~~k\in[t],\]where $\{u_1,u_2,\ldots,u_t\}$ is some basis of $F$ over $B$. When $x=\alpha$, $p_{k,\alpha}(\alpha)=u_k$ for each $k\in [t]$, which together form a basis of $F$ over $B$. When $x\neq\alpha$, $p_{k,\alpha}(x)=\text{Tr}(u_k(x-\alpha))/(x-\alpha)$ for each $k\in [t]$, which together span a subspace of dimension just one over $B$. Applying the above repair scheme, the total repair  bandwidth is just $n-1$ bits in $B$.

From this observation, we introduce the concept of polynomial codes. When a code $\C$ can be viewed as a set of polynomials $\F$ from a set of evaluation points $A=\{\alpha_1,\ldots,\alpha_n\}$ into $F$, i.e.,
 \[\C=\{(f(\alpha_1),\ldots,f(\alpha_n)):f\in \F\},\] then we say $\C$ is a {\it polynomial code}. We often abuse the notation and write $f\in \C$ to mean that the evaluation vector $(f(\alpha_1),\ldots,f(\alpha_n))$ is in $\C$.  Here, each evaluation point corresponds to a node.
 GRM codes and its dual codes, and RS codes, all are polynomial codes. To simplify the notation, we also say that GRS codes are polynomial codes, since recovering $f(\alpha_i)$ is equivalent to recovering $\lambda_if(\alpha_i)$. The problem of  Guruswami-Wootters repair scheme for RS codes is equivalent to the problem of finding some nice polynomials over $F$. We restate it  below for general polynomial codes.

\begin{theorem}\label{poll}
Let $B\leq F$ be a subfield so that the extension degree of $F$ over $B$ is $t$. Suppose that $\C$ and $\C^{\perp}$ are polynomial codes with the same evaluation point set  $A$. Then the following are equivalent.
\begin{enumerate}[(i)]
\item There is a linear repair scheme for $\C$ for  single erasure with bandwidth $b$.
\item For each $\alpha^*\in A$, there is a set $\P(\alpha^*)$ of $t$ polynomials in $\C^{\perp}$ so that \[\dim_B(\{p(\alpha^*):p\in \P(\alpha^*)\})=t,\] and  \[b\geq \max_{\alpha^*\in A}\sum_{\alpha\in A\setminus \{\alpha^*\}}\dim_B(\{p(\alpha):p\in \P(\alpha^*)\}).\]
\end{enumerate}
\end{theorem}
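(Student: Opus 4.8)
The plan is to prove the two implications separately: the forward direction (ii)$\Rightarrow$(i) is the explicit construction already laid out in the repair-scheme subsection, while (i)$\Rightarrow$(ii) is a converse that turns an arbitrary linear repair scheme into a family of dual polynomials. Throughout I will lean on two elementary facts about the trace form: it is non-degenerate, so every $B$-linear functional on $F$ is $x\mapsto\text{Tr}(\beta x)$ for a unique $\beta\in F$, and the map $\beta\mapsto\text{Tr}(\beta\,\cdot)$ is a $B$-linear isomorphism from $F$ onto its $B$-dual; consequently, for any finite family $\{\beta_k\}\subset F$ the functionals $\{\text{Tr}(\beta_k\,\cdot)\}$ span a space of $B$-dimension exactly $\dim_B\{\beta_k\}$. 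I will also record once that the GRS multiplier of $\C^{\perp}$ is harmless: each $\lambda_\alpha\neq0$ acts on $F$ as a $B$-linear bijection, so replacing $p(\alpha)$ by $\lambda_\alpha p(\alpha)$ leaves every $B$-dimension in (ii) unchanged, and I may freely conflate a polynomial $p\in\C^{\perp}$ with the dual codeword it determines.

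For (ii)$\Rightarrow$(i), I would fix an erased position $\alpha^*$ and the given $\P(\alpha^*)=\{p_1,\dots,p_t\}\subset\C^{\perp}$. Each $p_k$ yields a parity check $p_k(\alpha^*)f(\alpha^*)=-\sum_{\alpha\neq\alpha^*}p_k(\alpha)f(\alpha)$ (multipliers absorbed), and applying $\text{Tr}$ gives the $t$ scalar equations $\text{Tr}(p_k(\alpha^*)f(\alpha^*))=-\sum_{\alpha\neq\alpha^*}\text{Tr}(p_k(\alpha)f(\alpha))$. Since $\{p_k(\alpha^*):k\in[t]\}$ is a basis of $F$ over $B$, the left-hand values are the coordinates of $f(\alpha^*)$ against the dual basis, so $f(\alpha^*)$ is recovered once all right-hand sides are known. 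From a surviving node $\alpha$ the only information needed is $\{\text{Tr}(p_k(\alpha)f(\alpha)):k\in[t]\}$, which lies in the value space of the functional family $\{\text{Tr}(p_k(\alpha)\,\cdot)\}$; by the dimension fact a basis of this family has size $\dim_B\{p_k(\alpha):k\in[t]\}$, and downloading that many $B$-symbols suffices to reconstruct all $t$ traces by fixed $B$-combinations. Summing over $\alpha$ and maximizing over $\alpha^*$ gives a scheme of worst-case bandwidth $\max_{\alpha^*}\sum_{\alpha\neq\alpha^*}\dim_B\{p(\alpha):p\in\P(\alpha^*)\}\leq b$.

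For the converse (i)$\Rightarrow$(ii), I would start from an arbitrary linear repair scheme for the erasure at $\alpha^*$. By linearity, what it downloads from a surviving node $\alpha$ is a tuple of $B$-linear functionals of $f(\alpha)$, which by non-degeneracy I write as $\text{Tr}(\gamma_{\alpha,l}f(\alpha))$, $l\in[w_\alpha]$, with the $\gamma_{\alpha,l}$ chosen $B$-independent so that $w_\alpha$ is the number of $B$-symbols taken from node $\alpha$. Correct recovery means each of $t$ independent $B$-functionals of $f(\alpha^*)$, say $\text{Tr}(\nu_kf(\alpha^*))$ for a basis $\{\nu_k\}$, is a fixed $B$-combination of the downloaded symbols, valid for every $f\in\F$. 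Collapsing the $B$-coefficients into $\delta_{k,\alpha}=\sum_l b_{k,\alpha,l}\gamma_{\alpha,l}\in F$ and using $B$-linearity of $\text{Tr}$, the recovery identity becomes $\text{Tr}\bigl(\nu_kf(\alpha^*)-\sum_{\alpha\neq\alpha^*}\delta_{k,\alpha}f(\alpha)\bigr)=0$ for all codewords, i.e. $\text{Tr}(\langle P_k,c\rangle)=0$ for the vector $P_k\in F^n$ with $P_k(\alpha^*)=\nu_k$ and $P_k(\alpha)=-\delta_{k,\alpha}$.

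The crux is to promote each $P_k$ from the \emph{$B$-trace-dual} of $\C$ to the genuine $F$-linear dual $\C^{\perp}$. Here I invoke the key observation that for an $F$-linear code the two coincide: if $\text{Tr}(\langle P,c\rangle)=0$ for all $c\in\C$, then $F$-linearity gives $\text{Tr}(\lambda\langle P,c\rangle)=\text{Tr}(\langle P,\lambda c\rangle)=0$ for every $\lambda\in F$, whence $\langle P,c\rangle=0$ by non-degeneracy and $P\in\C^{\perp}$. Since $\C^{\perp}$ is a polynomial code, each $P_k$ is the dual codeword of some $p_k\in\C^{\perp}$ with $P_k(\alpha)=\lambda_\alpha p_k(\alpha)$; then $\{p_k(\alpha^*):k\}$ is a scaling of the basis $\{\nu_k\}$, hence a basis, so the dimension at $\alpha^*$ is $t$, while $P_k(\alpha)=-\delta_{k,\alpha}\in\text{span}_B\{\gamma_{\alpha,l}\}$ together with multiplier-invariance gives $\dim_B\{p_k(\alpha):k\}=\dim_B\{P_k(\alpha):k\}\leq w_\alpha$. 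Summing yields $\sum_{\alpha\neq\alpha^*}\dim_B\{p(\alpha):p\in\P(\alpha^*)\}\leq\sum_\alpha w_\alpha\leq b$, which is (ii). I expect the main obstacle to be exactly this last passage — showing the naturally arising repair vectors lie in $\C^{\perp}$ rather than merely in the larger trace-dual — together with the bookkeeping that keeps the GRS multipliers from disturbing the $B$-dimension counts.
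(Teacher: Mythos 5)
Your proof is correct and follows essentially the same route as the paper: your direction (ii)$\Rightarrow$(i) is precisely the construction the paper reviews in its subsections on the Guruswami--Wootters scheme (download a basis-worth of traces $\text{Tr}(p_k(\alpha)f(\alpha))$ from each surviving node and recover the remaining traces by fixed $B$-linear combinations, then invert via a dual basis), while your converse is the standard Guruswami--Wootters duality argument that the paper implicitly defers to when it presents Theorem~\ref{poll} as a restatement of their result for polynomial codes. In particular, your key step --- that trace-orthogonality of the repair vectors $P_k$ to the $F$-linear code $\C$ forces genuine $F$-orthogonality ($\text{Tr}(\lambda\langle P_k,c\rangle)=0$ for all $\lambda\in F$ implies $\langle P_k,c\rangle=0$ by non-degeneracy), together with the observation that nonzero GRS multipliers preserve all $B$-dimension counts --- correctly supplies the converse the paper states without proof.
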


\section{Repairing GRM codes for one erasure}\label{fac}

 When some nodes  fail, all the rest nodes are available to help recovering the failed nodes. Let $\C$ be a  $GRM(\mu,m)$ code over $\bbF_q$ defined in Subsection \ref{GRM}. Note that the constant codeword $(1,1,\cdots,1)$ belongs to any GRM codes. When a single node $c_i$ is failed, we have two trivial repair schemes. One is to read all surviving nodes and download all context of them, and then sum it to get $c_i$, the repair bandwidth is $(n-1)t$ over $\bbF_p$. By Theorem \ref{subfieldcode}, we know that $\C$ is a subcode of $RS(A,k)$ for some $k$. So the second scheme is to read any $k$ surviving nodes to repair $c_i$, with the repair bandwidth being $kt$ over $\bbF_p$.

In this section, we consider nontrivial linear  repair schemes for GRM codes with smaller bandwidth under different scenarios. The key is to find appropriate polynomials so that the requirements of Theorem~\ref{poll} hold. Before proceeding to our constructions, we define the subspace function, which has been used to replace the trace function in the linear repair scheme  of RS codes \cite{mardia2018repairing,dau2017optimal}. Let $V\subset F$ be a subspace of dimension $s$ over $B$. The {\it subspace polynomial} defined by $V$ is  $$L_V(x)=\prod_{\alpha\in V}(x-\alpha).$$
 It is well known that $L_V$ is a linearized polynomial of the form
   $$L_V(x)=c_0x+\sum_{i=1}^sc_ix^{|B|^i},$$
   where $c_0,c_1,\cdots,c_s\in F$ and $c_0=\prod_{\alpha\in V\setminus\{0\}}\alpha\neq 0.$ Apparently the kernel of $L_V$ is $V$, so the image of $L_V$ is of dimension $t-s$ over $B$.

  \subsection{Repair scheme for one failure}

In a GRM code $GRM(\mu,m)$, its evaluation point are the $m$-dim vectors over $\bbF_q$, as opposed to RS codes whose evaluation points are elements of $\bbF_q$. The polynomials in GRM codes are multivariate polynomials instead of one variate polynomials in RS codes. So in the repair scheme of GRM codes, we pick one variable, say $x_m$ to play the same role as that in RS codes, and keep the rest $m-1$ variables vanishing. We notice that the polynomial $1-x^{q-1}$ vanishes when $x$ is any nonzero element of $\bbF_q$. Hence given a vector $(\alpha_1,\alpha_2,\cdots,\alpha_m)$,  the polynomial $f(\vec{x})=\prod_{i=1}^{m-1}(1-(x_i-\alpha_i)^{q-1})$ equals one when $\vec{x}=(\alpha_1,\alpha_2,\cdots,\alpha_{m-1},x_m)$ for any $x_m\in \bbF_q$, and equals zero otherwise. In this paper, we always use the notation `` $\vec{\cdot}$ " to denote a vector in a vector space over $\bbF_q$.  Based on this observation, we have the following theorem.

\begin{theorem}\label{1eras}
  Let $\mathcal{C}$ be a GRM code $GRM(\mu,m)$ over $\mathbb{F}_q$. Then $\C$ admits an exact repair scheme for  single failure  with bandwidth at most $(q-1)(t-s)$ over $\mathbb{F}_p$, where $q=p^t$ and $s=\left \lfloor \log_p(p^t-\mu-1)\right \rfloor$.
\end{theorem}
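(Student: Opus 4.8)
The plan is to apply the polynomial-code criterion of Theorem~\ref{poll} with base field $B=\bbF_p$ and $F=\bbF_q=\bbF_{p^t}$, so that $[F:B]=t$ and $\C^\perp=GRM(\mu^\perp,m)$ with $\mu^\perp=m(q-1)-\mu-1$. For a fixed failed node $\vec{\alpha}^*=(\alpha_1,\ldots,\alpha_m)$ I would exhibit a set $\P(\vec{\alpha}^*)$ of $t$ dual polynomials that (i) evaluate to a basis of $F$ over $B$ at $\vec{\alpha}^*$, and (ii) span a space of dimension only $t-s$ at the $q-1$ other points of the line $\{(\alpha_1,\ldots,\alpha_{m-1},x_m):x_m\in\bbF_q\}$ while vanishing off this line. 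This immediately yields bandwidth $(q-1)(t-s)$ via Theorem~\ref{poll}.

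The construction combines the selector $f(\vec{x})=\prod_{i=1}^{m-1}\bigl(1-(x_i-\alpha_i)^{q-1}\bigr)$, which is $1$ on the line and $0$ elsewhere, with a one-variable subspace-polynomial repair factor playing the role of the trace terms in the Guruswami--Wootters scheme. I would choose a subspace $V\subseteq F$ of dimension $s$ over $B$, write its subspace polynomial as $L_V(x)=c_0x+\sum_{i=1}^s c_i x^{p^i}$ with $c_0\neq 0$ (so $\deg L_V=p^s$), and fix a basis $\{\zeta_1,\ldots,\zeta_t\}$ of $F$ over $B$. Setting $q_k(x_m)=L_V\bigl(\zeta_k(x_m-\alpha_m)\bigr)/(x_m-\alpha_m)$, which is a genuine polynomial of degree $p^s-1$ because $L_V(0)=0$, I define $p_{k,\vec{\alpha}^*}(\vec{x})=f(\vec{x})\,q_k(x_m)$ for $k\in[t]$.

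The two conditions of Theorem~\ref{poll} then follow from short computations that I expect to be routine. At $\vec{\alpha}^*$ the selector equals $1$ and $q_k(\alpha_m)=c_0\zeta_k$, so the $t$ values form a basis of $F$ over $B$; off the line $f$ vanishes, contributing nothing; and at a line point with $x_m=\beta\neq\alpha_m$ the map $\zeta\mapsto L_V(\zeta(\beta-\alpha_m))/(\beta-\alpha_m)$ is $B$-linear, and since $\zeta\mapsto\zeta(\beta-\alpha_m)$ is a bijection its image is the scaled subspace $(\beta-\alpha_m)^{-1}L_V(F)$ of dimension $\dim_B \operatorname{im}(L_V)=t-s$. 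Summing over the $q-1$ surviving line nodes gives bandwidth exactly $(q-1)(t-s)$.

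The step that drives the argument, and where the value $s=\lfloor\log_p(p^t-\mu-1)\rfloor$ is forced, is verifying $p_{k,\vec{\alpha}^*}\in\C^\perp$, i.e. $\deg p_{k,\vec{\alpha}^*}\leq\mu^\perp$. Since $\deg f=(m-1)(q-1)$ and $\deg q_k=p^s-1$, the requirement $(m-1)(q-1)+p^s-1\leq m(q-1)-\mu-1$ reduces to $p^s\leq p^t-\mu-1$, which is exactly what the stated choice of $s$ guarantees (and which also forces $s\leq t$, so that a subspace $V$ of dimension $s$ exists in $F$). I expect this degree bookkeeping, together with the linearity argument pinning the line dimension to $t-s$, to be the only genuine obstacle; everything else is checking that the selector annihilates all off-line contributions and then invoking Theorem~\ref{poll}.
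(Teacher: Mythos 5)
Your proposal is correct and follows essentially the same route as the paper's own proof: the identical selector polynomial $\prod_{j=1}^{m-1}\bigl(1-(x_j-\alpha_j)^{q-1}\bigr)$ multiplied by the subspace-polynomial factor $L_V\bigl(\xi_i(x_m-\alpha_m)\bigr)/(x_m-\alpha_m)$, the same degree bookkeeping forcing $s=\left\lfloor\log_p(p^t-\mu-1)\right\rfloor$, and the same dimension count of $t-s$ on the surviving line nodes via Theorem~\ref{poll}. Your only refinements are cosmetic: you make the invocation of Theorem~\ref{poll} explicit and observe that the line dimension is exactly (not merely at most) $t-s$.
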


\begin{proof}
  Suppose a node $\vec{\alpha}^*=(\alpha_1,\alpha_2,\cdots,\alpha_m)$ fails. Let $\{\xi_1,\xi_2,\cdots,\xi_t\}$ be a basis of $\mathbb{F}_q$ over $\mathbb{F}_p$. Let $L_V(x)=\prod_{\alpha\in V}(x-\alpha)$ be the subspace polynomial defined by a subspace $V\subset \mathbb{F}_q$ of dimension $s$ over $\mathbb{F}_p$. For each $i\in [t]$, define
  \begin{equation}\label{pol}
    p_i(\vec{x})=\prod_{j=1}^{m-1}\left(1-(x_j-\alpha_j)^{q-1}\right)\frac{L_V\left(\xi_i(x_m-\alpha_m)\right)}{x_m-\alpha_m}.
  \end{equation}
  We claim that each $p_i\in \C^\perp $, which is a $GRM(\mu^\perp,m)$ with $\mu^\bot=m(q-1)-\mu-1$. In fact, the degree of $p_i$ is $(m-1)(q-1)+p^s-1$, which satisfies
  $$(m-1)(q-1)+p^s-1\leq m(q-1)-\mu-1,$$
  since $s=\left \lfloor \log_p(p^t-\mu-1)\right \rfloor$.

   When $\vec{x}=\vec{\alpha}^*$,
  $$p_i(\vec{\alpha}^*)=\prod_{j=1}^{m-1}\left(1-(\alpha_j-\alpha_j)^{q-1}\right)\frac{L_V\left(\xi_i(\alpha_m-\alpha_m)\right)}{\alpha_m-\alpha_m}=c_0\xi_i.$$
  Thus $\{p_i(\vec{\alpha}^*):i\in[t]\}=\{c_0\xi_1,c_0\xi_2,\cdots,c_0\xi_t\}$ is a basis of $\mathbb{F}_q$ over $\mathbb{F}_p$.
  Denote $A=\{\vec{\alpha}'\in\bbF_q^m:\exists~l\in[m-1],\;\st\alpha'_l\neq\alpha_l\}$, and $B=\{\vec{\alpha}'\in\bbF_q^m:\vec{\alpha}'\notin A\;\text{and}\;\alpha'_m\neq\alpha_m\}$. Note that $A$ and $B$ are disjoint and cover all elements of $\bbF_q^m\setminus\{\vec{\alpha}^*\}$. For each $\vec{\alpha}'\in A$, we have
  $$p_i(\vec{\alpha}')=\prod_{j=1}^{m-1}\left(1-(\alpha'_j-\alpha_j)^{q-1}\right)\frac{L_V(\xi_i(\alpha'_m-\alpha_m))}{\alpha'_m-\alpha_m}=0.$$
  For each $\vec{\alpha}'\in B$, we have
  \begin{equation*}
    \begin{split}
       p_i(\vec{\alpha}')&=\prod_{j=1}^{m-1}\left[1-(\alpha'_j-\alpha_j)^{q-1}\right]\frac{L_V(\xi_i(\alpha'_m-\alpha_m))}{\alpha'_m-\alpha_m} \\
         & =\frac{L_V(\xi_i(\alpha'_m-\alpha_m))}{\alpha'_m-\alpha_m}\in\frac{Im(L_V)}{\alpha'_m-\alpha_m}.
     \end{split}
  \end{equation*}
  So for each $\vec{\alpha}'\in B$,
  $$\dim_{\mathbb{F}_p}\left(\{p_i(\vec{\alpha}'):i\in[t]\}\right)\leq t-s.$$
Hence the total repair bandwidth is
  \begin{equation*}
    \begin{split}
      b&=\sum_{\vec{\alpha}'\neq \alpha^*}\dim_{\mathbb{F}_p}(\{p_i(\vec{\alpha}'):i\in[t]\})\\
      &=\sum_{\vec{\alpha}'\in A}\dim_{\mathbb{F}_p}(\{p_i(\vec{\alpha}'):i\in[t]\})+\sum_{\vec{\alpha}'\in B}\dim_{\mathbb{F}_p}(\{p_i(\vec{\alpha}'):i\in[t]\})\\
      &=\sum_{\vec{\alpha}'\in B}\dim_{\mathbb{F}_p}(\{p_i(\vec{\alpha}'):i\in[t]\})\\
      &\leq |B|(t-s)= (q-1)(t-s).
    \end{split}
  \end{equation*}
\end{proof}

\begin{remark}\label{remark1} We list several remarks below.
\begin{enumerate}[(1)]
\item  In the second trivial repair scheme at the beginning of this section, the total repair bandwidth is $kt=(n-d+1)t=(q^{m}-(\sigma+1)q^{\delta}+1)t$, where $0\leq \sigma< q-1$ and $\delta$ satisfies $\mu=(m-\delta)(q-1)-\sigma$. This in general is much bigger than the bandwidth $(q-1)(t-s)$ in Theorem~\ref{1eras} since $\delta$ is strictly smaller than $m$.

\item  If $\mu=p^t-p^{t-1}-1$, then $s=t-1$. In this case, the subspace polynomial $L_V$ in the proof of Theorem~\ref{1eras} is reduced to trace function, and the repair bandwidth is at most $q-1$ over $\mathbb{F}_p$.

 \item If we replace the special indeterminate $x_m$ with any $x_i,i\in [m]$ in (\ref{pol}), the proof of Theorem~\ref{1eras} still works. Moreover, for different $x_i$, the repair set $B$ are disjoint.  This tells us that for any failed node, there are $m$ mutually disjoint repair sets with the same repair bandwidth, which is more helpful in the literature of locally repair codes \cite{huang2013pyramid,dimakis2010network}.
  \end{enumerate}
\end{remark}

Now we give an example to illustrate the repair scheme of Theorem~\ref{1eras}.

\begin{example}\label{eg}
  Let $p=2$ and $q=2^4$, then $t=4$. Let $\xi$ be a primitive element of $\bbF_{2^4}$ that satisfies $\xi^4+\xi^3+1=0$, then $1   ,\xi,\xi^2,\xi^3$ is a basis of $\bbF_{2^4}$ over $\bbF_2$. Consider $GRM(11,2)$ over $\bbF_{2^4}$. We pick $s=\left \lfloor \log_2(2^4-11-1)\right \rfloor=2$, and $V=\bbF_4=\{0,1,\omega,\bar{\omega}\}$, then $V$ is a vector space of dimension two over $\bbF_2$. Suppose the failed node is $\vec{\alpha}^*=(0,0)$, then the corresponding sets $A=\{\vec{\alpha}'\in\bbF_{2^4}^2:\alpha'_1\neq 0\}$, $B=\{\vec{\alpha}'\in\bbF_{2^4}^2:\vec{\alpha}'\notin A\;\text{and}\;\alpha'_2\neq 0\}$, and
  \begin{equation*}
    \begin{split}
       L_V(x)&=x(x-1)(x-\omega)(x-\bar{\omega})\\
         &=x^4-(1+\omega+\bar{\omega})x^3+(1+\omega+\bar{\omega})x^2-x\\
         &=x^4-x.
     \end{split}
  \end{equation*}
  Then
  $$p_i(\vec{x}) =(1-x_1^{q-1})\frac{L_V(\xi^{i-1} x_2)}{x_2}=(1-x_1^{15})\frac{(\xi^{i-1} x_2)^4-\xi^{i-1} x_2}{x_2}, i\in [4].$$

For each node $\vec{x}=(x_1,x_2)\in A$, $x_1\neq 0$, then $p_{i}(\vec{x})=0$. For each node  $\vec{x}=(x_1,x_2)\in B$, $x_1= 0$,   then $p_i$ can be represented as $p_i(\vec{x})=\xi^{4(i-1)}x_2^3+\xi^{i-1}$ for $i\in[4]$. In Table \ref{table}, we list the values of the polynomials $p_i$ at all nodes in $B$, from which we know the total repair bandwidth is $30$ over $\bbF_2$.
 \begin{table*}[!htbp]
\centering
\begin{tabular}{|c|c|c|c|c|c|c|c|c|}
\hline
\diagbox{polynomials}{nodes in $B\cup\{\vec{\alpha}^*\}$}&$(0,0)$&$(0,1)$&$(0,\xi)$&$(0,\xi^2)$&$(0,\xi^3)$&$(0,\xi^4)$&$(0,\xi^5)$&$(0,\xi^6)$\\
\hline
$p_1$&$1$&$0$&$\xi^4$&$\xi^8$&$\xi^2$&$\xi$&0&$\xi^4$\\
\hline
$p_2$&$\xi$&$\xi^5$&$\xi^9$&$\xi^3$&$\xi^2$&$0$&$\xi^5$&$\xi^9$\\
\hline
$p_3$&$\xi^2$&$\xi^{10}$&$\xi^4$&$\xi^3$&$0$&$\xi^6$&$\xi^{10}$&$\xi^4$\\
\hline
$p_4$&$\xi^3$&$\xi^5$&$\xi^4$&$0$&$\xi^7$&$\xi^{11}$&$\xi^5$&$\xi^4$\\
\hline
dimensions over $\bbF_2$&4&$2$&$2$&$2$&$2$&$2$&$2$&$2$\\
\hline
\hline
\diagbox{polynomials}{nodes in $B\cup\{\vec{\alpha}^*\}$}&$(0,\xi^7)$&$(0,\xi^8)$&$(0,\xi^9)$&$(0,\xi^{10})$&$(0,\xi^{11})$&$(0,\xi^{12})$&$(0,\xi^{13})$&$(0,\xi^{14})$\\
\hline
$p_1$&$\xi^8$&$\xi^2$&$\xi$&$0$&$\xi^4$&$\xi^8$&$\xi^2$&$\xi$\\
\hline
$p_2$&$\xi^3$&$\xi^2$&$0$&$\xi^5$&$\xi^9$&$\xi^3$&$\xi^2$&$0$\\
\hline
$p_3$&$\xi^3$&$0$&$\xi^6$&$\xi^{10}$&$\xi^4$&$\xi^3$&$0$&$\xi^6$\\
\hline
$p_4$&$0$&$\xi^7$&$\xi^{11}$&$\xi^5$&$\xi^4$&$0$&$\xi^7$&$\xi^{11}$\\
\hline
dimensions over $\bbF_2$&$2$&$2$&$2$&$2$&$2$&$2$&$2$&$2$\\
\hline
\end{tabular}
\vspace{0.2cm}
\caption{The values of polynomials at surviving nodes in $B$}
\label{table}
\end{table*}

\end{example}

\subsection{A lower bound}
Next, we give a lower bound for the bandwidth of repairing GRM codes for one erasure.
\begin{theorem}\label{low}
    Let $\C$ be a $GRM(\mu,m)$ over $\bbF_q$  with $n=q^m$ and $q=p^t$.  Suppose that $d^\perp$  is the minimum hamming distance of $\C^\perp $. Then any linear repair scheme for $\mathcal{C}$ over $\mathbb{F}_p$ must have bandwidth at least
  $$b\geq (n-1)\log_p\left ( \frac{n-1}{n-d^\perp+\frac{d^\perp-1}{q}}\right ).$$
\end{theorem}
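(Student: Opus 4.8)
The plan is to invoke the polynomial characterization of Theorem~\ref{poll} and then run a double-counting argument against the minimum distance of $\C^\perp$, closing with a convexity estimate. Since $q=p^t$, the relevant extension is $[\bbF_q:\bbF_p]=t$. By Theorem~\ref{poll}, any linear repair scheme over $\bbF_p$ furnishes, for the failed node $\alpha^*$, a set $\P(\alpha^*)=\{p_1,\dots,p_t\}\subseteq\C^\perp$ with $\dim_{\bbF_p}\{p_i(\alpha^*):i\in[t]\}=t$, and the bandwidth obeys $b\geq\sum_{\alpha\neq\alpha^*}d_\alpha$, where $d_\alpha:=\dim_{\bbF_p}\mathrm{span}_{\bbF_p}\{p_i(\alpha):i\in[t]\}$. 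It therefore suffices to lower-bound $\sum_{\alpha\neq\alpha^*}d_\alpha$ for an arbitrary admissible choice of $p_1,\dots,p_t$.

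First I would count pairs $(\alpha,c)$, where $\alpha\neq\alpha^*$ is a surviving node and $c=(c_1,\dots,c_t)\in\bbF_p^t\setminus\{0\}$ indexes a nonzero combination $p_c:=\sum_i c_i p_i$ that vanishes at $\alpha$. Counting by node, the $\bbF_p$-linear map $\bbF_p^t\to\bbF_q$, $c\mapsto p_c(\alpha)$, has image of dimension $d_\alpha$, so its kernel contains $p^{t-d_\alpha}$ elements, giving $p^{t-d_\alpha}-1$ nonzero combinations vanishing at $\alpha$. Counting by combination, since the values $p_i(\alpha^*)$ are $\bbF_p$-independent, every nonzero $c$ yields $p_c(\alpha^*)\neq0$, so $p_c$ is a genuine nonzero codeword of $\C^\perp$; its weight is at least $d^\perp$, hence it has at most $n-d^\perp$ zeros in total, all lying among the surviving nodes. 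Equating the two counts gives
$$\sum_{\alpha\neq\alpha^*}\bigl(p^{t-d_\alpha}-1\bigr)\;\leq\;(q-1)(n-d^\perp),$$
that is, $\sum_{\alpha\neq\alpha^*}p^{t-d_\alpha}\leq(n-1)+(q-1)(n-d^\perp)$.

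The final step converts this upper bound on $\sum_\alpha p^{t-d_\alpha}$ into the desired lower bound on $b=\sum_{\alpha\neq\alpha^*}d_\alpha$. Since $x\mapsto p^{t-x}$ is convex, Jensen's inequality over the $n-1$ surviving nodes yields $\sum_{\alpha\neq\alpha^*}p^{t-d_\alpha}\geq(n-1)\,p^{\,t-b/(n-1)}$. Combining with the displayed bound, taking $\log_p$, and rearranging with $q=p^t$ gives $\tfrac{b}{n-1}\geq t-\log_p\!\frac{(n-1)+(q-1)(n-d^\perp)}{n-1}=\log_p\!\frac{q(n-1)}{(n-1)+(q-1)(n-d^\perp)}$. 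Using the identity $(n-1)+(q-1)(n-d^\perp)=q\bigl(n-d^\perp+\tfrac{d^\perp-1}{q}\bigr)$ then produces exactly $b\geq(n-1)\log_p\frac{n-1}{\,n-d^\perp+(d^\perp-1)/q\,}$.

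The arguments are elementary once set up; the one place to be careful is ensuring each nonzero combination $p_c$ is truly a nonzero codeword, so that the minimum-distance bound applies — this is precisely what the spanning condition $\dim_{\bbF_p}\{p_i(\alpha^*):i\in[t]\}=t$ buys, and it also guarantees $\alpha^*$ itself never contributes a zero to $p_c$. A secondary point is verifying the algebraic identity relating $(n-1)+(q-1)(n-d^\perp)$ to the stated denominator and confirming the bound is non-vacuous, i.e.\ that the logarithm is non-negative, which holds because $(q-1)(d^\perp-1)\geq0$.
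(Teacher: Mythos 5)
Your proof is correct and follows essentially the same route as the paper: the paper omits the argument and defers to the proof of Theorem~3 in \cite{guruswami2017repairing}, which is precisely your double count of zeros of the nonzero $\bbF_p$-combinations $p_c$ (with the degree-based root bound of the RS case replaced by the bound of at most $n-d^\perp$ zeros from the dual minimum distance), followed by the same convexity step. Your reconstruction, including the identity $(n-1)+(q-1)(n-d^\perp)=q\bigl(n-d^\perp+\tfrac{d^\perp-1}{q}\bigr)$, supplies exactly the details the paper leaves to the reference.
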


We omit the proof of Theorem~\ref{low} here since it is the same as that of \cite[Theorem 3]{guruswami2017repairing}. Theorem 3 of \cite{guruswami2017repairing} deals with the case for an $[n,k,n-k+1]$ RS code, whose dual code has  minimum Hamming distance $k+1$. Here, by careful arguments, we only need to replace $k$  in  \cite[Theorem 3]{guruswami2017repairing} with $d^\perp$ to get Theorem~\ref{low}.
  \begin{remark} In Remark~\ref{remark1}(2), when $\mu=p^t-p^{t-1}-1$,  the repair bandwidth is at most $q-1$ over $\mathbb{F}_p$. We compute the lower bound of repair bandwidth for this case. From Eq.~(\ref{dis1}), we know $d^\bot=\mu+2$ and $0< \frac{d^\bot-1}{q}<1$. So by Theorem~\ref{low}, the repair bandwidth is at least
  \begin{equation*}
      (n-1)\log_p\left ( \frac{n-1}{n-d^\perp+\frac{d^\perp-1}{q}}\right )  \geq \frac{n-1}{\ln p} \ln \frac{n-1}{n-\mu-1}.
  \end{equation*}
  Since $\ln (1+x)\geq x/(x+1)$ for any real number $x$, we have $\ln \frac{n-1}{n-\mu-1}=\ln(1+\frac{\mu}{n-\mu-1})\geq \frac{\mu}{n-1}$. So the bandwidth is at least $\mu/\ln p$. When the field extension index $t$ goes to infinity, the ratio of the upper bound in Remark~\ref{remark1}(2) to this lower bound is $\frac{p}{p-1}\ln p$, which closes to one when $p$ is small.

  \end{remark}

\section{Repairing GRM codes for multiple erasures}\label{scheme}
In this section, we give two repair schemes of GRM codes for multiple erasures.
 The first scheme is distributed, where we need to find $l$ replacement nodes and then recover each failed node  independently. The second one is centralized, which only needs a single repair center that is responsible for the recovery of all failed nodes.

\subsection{Distributed repair scheme for $l$ failures}

A polynomial $p(\vec{x})$ is said to \emph{involve} a node $\vec{\alpha}$ if $p(\vec{\alpha})\neq 0$, to \emph{exclude} $\vec{\alpha}$ if not. In the distributed repair scheme, there are $l$ replacement nodes corresponding to the $l$ erasures. For each failed node $\vec{\alpha}$, we need $t$ polynomials (codewords in dual codes) to obtain $t$ independent traces. These polynomials should exclude all the rest $ l-1$ failed nodes and involve $\vec{\alpha}$, and the values at $\vec{\alpha}$ form a basis of $\bbF_q$ over $\bbF_p$. We first give a repair scheme for which the set of failed nodes have some special property.

\begin{lemma}\label{lem1}
  Let $\mathcal{C}$ be a $GRM(\mu,m)$ over $\mathbb{F}_q$, and let $\mathbb{F}_p$ be a subfield of $\mathbb{F}_q$ with $[\mathbb{F}_q:\mathbb{F}_p]=t$. Suppose the failed nodes $\vec{\alpha}_1,\vec{\alpha}_2,\cdots,\vec{\alpha}_l$ have the same coordinates except the first coordinate, which are all distinct. Then there is a distributed exact linear repair scheme  with bandwidth at most $l(q-l)(t-s)$ over $\mathbb{F}_p$, where $s=\left \lfloor\log_p\left(p^t-\mu-l\right)\right \rfloor.$
\end{lemma}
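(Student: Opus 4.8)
The plan is to mimic the single-erasure construction of Theorem~\ref{1eras}, now taking the \emph{first} coordinate as the distinguished variable (the failed nodes differ only there) and multiplying in an extra factor that annihilates the remaining failures. Write the failed nodes as $\vec{\alpha}_j=(\beta_j,\alpha_2,\ldots,\alpha_m)$ for $j\in[l]$, with $\beta_1,\ldots,\beta_l$ distinct. Fix a basis $\{\xi_1,\ldots,\xi_t\}$ of $\bbF_q$ over $\bbF_p$ and a subspace $V\subset\bbF_q$ of dimension $s$ over $\bbF_p$ with subspace polynomial $L_V(x)=c_0x+\sum_{i=1}^s c_ix^{p^i}$. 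To repair $\vec{\alpha}_j$, for each $i\in[t]$ I would set
\[
p_i^{(j)}(\vec{x})=\prod_{r=2}^{m}\left(1-(x_r-\alpha_r)^{q-1}\right)\cdot\prod_{j'\neq j}(x_1-\beta_{j'})\cdot\frac{L_V\!\left(\xi_i(x_1-\beta_j)\right)}{x_1-\beta_j},
\]
and repair the $l$ nodes independently using the families $\{p_i^{(j)}:i\in[t]\}$.

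The first task is to check that each $p_i^{(j)}\in\C^{\perp}=GRM(\mu^{\perp},m)$ with $\mu^{\perp}=m(q-1)-\mu-1$. Its total degree is $(m-1)(q-1)+(l-1)+(p^s-1)$, and the choice $s=\lfloor\log_p(p^t-\mu-l)\rfloor$ gives $p^s\leq p^t-\mu-l$, which rearranges exactly to $(m-1)(q-1)+(l-1)+(p^s-1)\leq\mu^{\perp}$. This is precisely where the degree-$(l-1)$ factor forces the smaller $s$ compared with the single-erasure value $\lfloor\log_p(p^t-\mu-1)\rfloor$, and I expect this degree bookkeeping, together with correctly locating where the $-l$ enters, to be the only genuinely delicate point; the construction itself is a natural generalization.

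Next I would verify the three defining properties of the distributed scheme for each family. At $\vec{\alpha}_j$ the product over $r$ equals $1$, the constant $\gamma_j:=\prod_{j'\neq j}(\beta_j-\beta_{j'})$ is nonzero since the $\beta$'s are distinct, and $L_V(\xi_i(x_1-\beta_j))/(x_1-\beta_j)$ evaluates to $c_0\xi_i$; hence $p_i^{(j)}(\vec{\alpha}_j)=c_0\gamma_j\xi_i$, and as $c_0\gamma_j\neq 0$ these $t$ values form a basis of $\bbF_q$ over $\bbF_p$. At any other failed node $\vec{\alpha}_{j'}$ with $j'\neq j$, the factor $\prod_{j''\neq j}(x_1-\beta_{j''})$ contains $(x_1-\beta_{j'})$ and so vanishes, confirming that $p_i^{(j)}$ excludes every other failure.

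Finally, for the bandwidth I would note that $p_i^{(j)}$ is nonzero only at nodes whose last $m-1$ coordinates equal $(\alpha_2,\ldots,\alpha_m)$; deleting the $l$ failed first-coordinates leaves exactly $q-l$ such surviving nodes. At a surviving node with first coordinate $\beta'\notin\{\beta_1,\ldots,\beta_l\}$ the two leading factors contribute a fixed nonzero scalar $\delta$, so $p_i^{(j)}(\vec{x})=\delta\cdot L_V(\xi_i(\beta'-\beta_j))/(\beta'-\beta_j)$, and all $t$ of these values lie in the scalar copy $Im(L_V)/(\beta'-\beta_j)$ of the image of $L_V$, which has dimension $t-s$ over $\bbF_p$. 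Thus $\dim_{\bbF_p}(\{p_i^{(j)}(\vec{x}):i\in[t]\})\leq t-s$ at each of the $q-l$ nodes, so repairing $\vec{\alpha}_j$ costs at most $(q-l)(t-s)$; summing the independent downloads of the $l$ replacement nodes yields the claimed total $l(q-l)(t-s)$ over $\bbF_p$.
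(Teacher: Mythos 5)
Your proposal is correct and takes essentially the same route as the paper: your $p_i^{(j)}$ is exactly the paper's $p_{i,u}$ (your factor $\prod_{j'\neq j}(x_1-\beta_{j'})$ is the paper's $H_u(\vec{x})$), and the degree bookkeeping $p^s\leq p^t-\mu-l$, the basis-at-own-node and vanishing-at-other-failures checks, and the $(q-l)(t-s)$-per-replacement bandwidth count all coincide with the paper's argument.
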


\begin{proof} By assumption,
  let $\vec{\alpha}_u=(\alpha_{u1},\beta_2,\cdots,\beta_m)\in \mathbb{F}_q^m$, $u\in[l]$, where $\alpha_{11},\alpha_{21},\cdots,\alpha_{l1}$ are distinct and $\beta_2,\cdots,\beta_m$ are fixed elements in $\mathbb{F}_q$. Let $V$ be a vector space of dimension $s$ over $\mathbb{F}_p$. Consider the subspace polynomial
  $$L_V(x)=\prod_{\alpha\in V}(x-\alpha)=c_0x+\sum_{i=1}^sc_ix^{p^i},\;\text{where}\;c_0\neq 0.$$


  For each node $\vec{\alpha}_u,u\in[l]$, define $H_u(\vec{x})=\prod_{j\in[l]\setminus \{u\}}(x_1-\alpha_{j1})$. Then for $u\in[l]$, $i\in[t]$, define
  $$p_{i,u}(\vec{x})=\prod_{j=2}^m\left(1-(x_j-\beta_j)^{q-1}\right)\frac{L_V(\xi_i(x_1-\alpha_{u1}))}{x_1-\alpha_{u1}}H_u(\vec{x}).$$
  Since $s=\left \lfloor\log_p\left(p^t-\mu-l\right)\right \rfloor$, we have $\deg(p_{i,u})=(m-1)(q-1)+p^s-1+l-1\leq m(q-1)-\mu-1$. Hence all polynomials
  $p_{i,u}\in GRM^\bot(\mu,m)$, where $u\in[l]$, $i\in[t]$.

  For each $v\in[l]\setminus\{u\}$,  $\alpha_{u1}\neq \alpha_{v1}$, then $H_u(\vec{\alpha}_v)=p_{i,u}(\vec{\alpha}_v)=0$, $i\in [t]$. If $u=v$, then $\prod_{j=2}^m\left[1-(\beta_j-\beta_j)^{q-1}\right]=1$, $\frac{L_V(\xi_i(\alpha_{u1}-\alpha_{u1}))}{\alpha_{u1}-\alpha_{u1}}=c_0\xi_i$, and $H_u(\vec{\alpha}_u)\neq 0$. So $p_{i,u}(\vec{\alpha}_u)=c_0\xi_i H_u(\vec{\alpha}_u)\neq 0$ for each $i\in [t]$. It is clear that for each $u\in [l]$, $\{p_{i,u}(\vec{\alpha}_u):i\in [t]\}= \{c_0H_u(\vec{\alpha}_u)\xi_1,c_0H_u(\vec{\alpha}_u)\xi_2,\cdots,c_0H_u(\vec{\alpha}_u)\xi_t\}$ forms a basis of $\mathbb{F}_q$ over $\mathbb{F}_p$.

\begin{table*}[!htbp]
\centering
\begin{tabular}{|c|c|c|c|c|c|}
\hline
\diagbox{polynomials}{failed nodes}&$\vec{\alpha}_1$&$\vec{\alpha}_2$&$\cdots$&$\vec{\alpha}_l$&goal\\ 
\hline
$p_{i,1}(x),i\in[t]$&$\cdot$&0&0&0&traces for node $\alpha_1$\\
\hline
$p_{i,2}(x),i\in[t]$&0&$\cdot$&0&0&traces for node $\alpha_2$\\
\hline
$\vdots$&$\vdots$&$\vdots$&$\vdots$&$\vdots$&$\vdots$\\
\hline
$p_{i,l}(x),i\in[t]$&0&0&0&$\cdot$&traces for node $\alpha_l$\\
\hline
\end{tabular}
\caption{The property of the polynomials $p_{i,u}$}
\label{table2}
\end{table*}

Table \ref{table2} summarizes the specific situations for each failed node. Here, each ``$\cdot$'' on the diagonal means a basis of $\mathbb{F}_q$ over $\mathbb{F}_p$, and the bases in different rows may be different.
From Table \ref{table2}, we know for each node $\vec{\alpha}_u,u\in[l]$, the polynomials $p_{i,u}$, $i\in[t]$ involve $\vec{\alpha}_u$ but exclude the rest $l-1$ failures, thus we can get $t$ independent traces for $\vec{\alpha}_u$,
$$\text{Tr}(f(\vec{\alpha}_u)p_{i,u}(\vec{\alpha}_u))=-\sum_{\vec{\alpha}\notin A}\text{Tr}(f(\vec{\alpha})p_{i,u}(\vec{\alpha})),$$ where $i\in [t]$, $f$ is the codeword we are considering and $A=\{\vec{\alpha}_1,\ldots,\vec{\alpha}_l\}$ is the set of failed nodes.

For any $\vec{\alpha}\notin A$, write $\vec{\alpha}=(\alpha_1,\alpha_2,\ldots, \alpha_m)$. If $\alpha_j\neq \beta_j$ for some $j\in [2,m]$, then $p_{i,u}(\vec{\alpha})=0$ for all $i\in [t]$ and $u\in [l]$. So we only need to download the content from the node of the form $\vec{\alpha}=(\alpha_1,\beta_2,\cdots,\beta_m)$, where $\alpha_1\neq \alpha_{u1}$ for all $u\in [l]$. Hence the total repair bandwidth is $l(q-l)(t-s)$.
\end{proof}

Lemma \ref{lem1} gives a repair scheme for a special set of $l$-erasures, for which all nodes have the same values except for one coordinate. This condition can be relaxed as follows without changing the bandwidth: all the erased nodes have pairwise distinct values on the first coordinates.  However, the special case in Lemma \ref{lem1}  is enough for us to consider a general set of $l$ erasures.

\begin{theorem}\label{distri}
 Let $\mathcal{C}$ be a $GRM(\mu,m)$ over $\mathbb{F}_q$, where $q=p^t$. Suppose the set of failed nodes are partitioned into $w$ groups, each of which has size $l_i$, $i\in [w]$ and has the same values except for the first coordinate. Then there is a distributed linear exact repair scheme with repair bandwidth at most $\sum_{i=1}^wl_i(q-l_i)(t-s_i)$ over $\mathbb{F}_p$, where $s_i=\left \lfloor\log_p\left(p^t-\mu-l_i\right)\right \rfloor$, $i\in[w]$.
\end{theorem}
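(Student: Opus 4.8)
The plan is to reduce the general partitioned case directly to Lemma~\ref{lem1}, treating each of the $w$ groups separately and verifying that the repair polynomials constructed for one group automatically annihilate the failed nodes of every other group. Concretely, for the $i$-th group I would run the construction of Lemma~\ref{lem1}: picking $s_i=\left\lfloor\log_p(p^t-\mu-l_i)\right\rfloor$ and a subspace $V_i\subset\bbF_q$ of dimension $s_i$ over $\bbF_p$, I obtain for each failed node $\vec{\alpha}$ in that group a set of $t$ dual codewords $p_{i,u}$ that involve $\vec{\alpha}$ with values forming a basis of $\bbF_q$ over $\bbF_p$, exclude the remaining $l_i-1$ failures of the same group, and lie in $\C^{\perp}$ by exactly the degree bound already verified inside Lemma~\ref{lem1}.

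The key new observation is that these polynomials also exclude the failed nodes of the other $w-1$ groups. Each group is characterized by a common value $(\beta_2^{(i)},\ldots,\beta_m^{(i)})$ on the last $m-1$ coordinates, since within a group only the first coordinate varies; two distinct groups must therefore disagree in at least one coordinate among positions $2,\ldots,m$. The polynomials $p_{i,u}$ carry the factor $\prod_{j=2}^{m}\bigl(1-(x_j-\beta_j^{(i)})^{q-1}\bigr)$, which vanishes at any point that disagrees with $(\beta_2^{(i)},\ldots,\beta_m^{(i)})$ in some coordinate $j\in[2,m]$. Hence $p_{i,u}$ evaluates to $0$ at every failed node of every group $i'\neq i$, so in the distributed model the replacement node for $\vec{\alpha}$ never downloads from any out-of-group failure and still recovers $t$ valid independent traces.

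For the bandwidth I would invoke the accounting of Lemma~\ref{lem1} group by group: recovering the $l_i$ nodes of group $i$ requires downloading $t-s_i$ symbols of $\bbF_p$ from each of the $q-l_i$ surviving nodes that share the coordinates $(\beta_2^{(i)},\ldots,\beta_m^{(i)})$, for a cost of $l_i(q-l_i)(t-s_i)$; summing over $i\in[w]$ yields the claimed bound $\sum_{i=1}^{w}l_i(q-l_i)(t-s_i)$. It is worth recording that the download sets for distinct groups are disjoint, since a surviving node helping group $i$ must match $(\beta_2^{(i)},\ldots,\beta_m^{(i)})$ on positions $2,\ldots,m$; thus there is no cross-group double counting and the total is exactly this sum.

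The main obstacle — indeed the only step that goes beyond a literal reapplication of Lemma~\ref{lem1} — is justifying the cross-group exclusion rigorously: confirming that distinct groups genuinely differ on some coordinate in $[2,m]$ and that the vanishing factor therefore kills all out-of-group failures simultaneously, for every failed node and every choice of helper polynomial. Once this is in place, the per-group schemes are mutually independent and the theorem follows immediately, with no fresh polynomial construction or degree estimate required beyond what Lemma~\ref{lem1} already supplies.
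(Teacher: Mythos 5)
Your proposal is correct and follows essentially the same route as the paper: apply the construction of Lemma~\ref{lem1} within each group, observe that the factor $\prod_{j=2}^{m}\bigl(1-(x_j-\beta_{ij})^{q-1}\bigr)$ forces the polynomials for group $i$ to vanish at every failed node of any other group (since distinct groups differ in some coordinate in $[2,m]$), and then sum the per-group bandwidth $l_i(q-l_i)(t-s_i)$. Your added remark that the helper sets of distinct groups are disjoint is a nice explicit touch the paper leaves implicit, but it does not change the argument.
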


\begin{proof} Let $\{\xi_1,\xi_2,\cdots,\xi_t\}$ be a basis of $\mathbb{F}_q$ over $\mathbb{F}_p$.
  Suppose the failed nodes in the $i$th group, $i\in[w]$, are of the form $(\alpha_{iu},\beta_{i2},\cdots,\beta_{im}),u\in[l_i]$, where $\alpha_{iu}$ is different for different $u$ and $\beta_{i2},\cdots,\beta_{im}$ are fixed elements in $\mathbb{F}_q$. Let $V_i$ be a subspace of dimension $s_i$.  For each $u\in[l_i]$, define $H_u(\vec{x})=\prod_{j\in[l_i]\setminus \{u\}}(x_1-\alpha_{ij})$. Applying Lemma \ref{lem1} to  $u\in[l_i]$ and $e\in [t]$, we get
  $$p_{i,e,u}(\vec{x})=\prod_{j=2}^m\left(1-(x_j-\beta_{ij})^{q-1}\right)\frac{L_{V_i}(\xi_e(x_1-\alpha_{iu}))}{x_1-\alpha_{iu}}H_u(\vec{x}).$$
 Since $s_i=\left \lfloor\log_p\left(p^t-\mu-l_i\right)\right \rfloor$, we have $(m-1)(q-1)+p^s-1+l_i-1\leq m(q-1)-\mu-1$, and hence  $p_{i,e,u}\in GRM^\bot(\mu,m)$.

 For $i'\in[w]$ and $i'\neq i$, each node $\vec{\alpha}_{i'}=(\alpha_{i'u},\beta_{i'2},\cdots,\beta_{i'm})$ in the $i'$th group must have a position $j\in[2,m]$, such that $\beta_{i'j}\neq\beta_{ij}$. So  $p_{i,e,u}(\vec{\alpha}_{i'} )=0$ for all $e\in [t]$ and $u\in [l_i]$, which means $p_{i,e,u}$ excludes all other failed nodes except $(\alpha_{iu},\beta_{i2},\cdots,\beta_{im})$ itself. By Lemma \ref{lem1}, we know that we can repair all these erasures with repair bandwidth at most $\sum_{i=1}^wl_i(q-l_i)(t-s_i)$ over $\mathbb{F}_p$.
\end{proof}

\begin{remark}\label{rmk} Let $\mathcal{C}$ be a $GRM(\mu,m)$ over $\mathbb{F}_q$, where $q=p^t$. Then there exists an integer $s$ such that $p^t-p^{s+1}\leq \mu \leq p^t-p^s-1$. Suppose the number of failures $l$ in Theorem~\ref{distri} satisfies that $1\leq l\leq p^t-p^s-\mu-1$. Then for any $1\leq l_i\leq l$, we have $s_i=\left \lfloor\log_p\left(p^t-\mu-l_i\right)\right \rfloor=s$. So the repair bandwidth is  at most $(t-s)\sum_{i=1}^wl_i(q-l_i)$ over $\mathbb{F}_p$. In particular,
  \begin{itemize}
    \item[a)] When the $l$ erasures are divided into $l$ groups, it has the worst repair bandwidth which is $l(q-1)(t-s).$
    \item[b)] When the $l$ erasures belong to the same group, \ie $w=1$, it corresponds to the minimal repair bandwidth  $l(q-l)(t-s).$
  \end{itemize}

\end{remark}

\subsection{Centralized repair scheme for $l$ failures}

We extend the framework of centralized repair scheme for MDS codes in \cite{mardia2018repairing} to GRM codes.
\begin{definition}\label{matr}
  Let $\C$ be  a $GRM(\mu,m)\subset \bbF_q^n$ with $n=q^m$ and $q=p^t$. Let $I\subseteq[n]$ have size $l$. A multiple-repair matrix with repair bandwidth $b$ for $I$ is a matrix $M\in\mathbb{F}^{n\times lt}_q $ with the following properties:
\begin{itemize}
  \item[1)] The columns of $M$ are codewords in the dual code $GRM^\bot(\mu,m)$.
  \item[2)] The submatrix $M[I,:]$ has full rank over $\bbF_p$, in the sense that for all nonzero $y\in \bbF_p^{lt}$, $M[I,:]\cdot y\neq 0.$
  \item[3)] We have
  $$\sum_{j\in[n]\backslash I}\text{dim}_B(\text{set}(M[j,:]))=b,$$ where  $\text{set}(M[j,:])$ denotes the set of elements in the $j$th row of $M$.
\end{itemize}
\end{definition}

The existence of a multiple-repair-matrix with bandwidth $b$ for a set $I$ provides a centralized repair scheme of GRM codes for the set $I$ of failed nodes. The proof is the same as that of \cite[Theorem 1]{mardia2018repairing}, we  give a sketch here for completeness.

\begin{theorem}\label{matrix}
  Let $\C$ be a $GRM(\mu,m)$ over $\bbF_q$  with $n=q^m$ and $q=p^t$. Suppose that for all $I\subseteq[n]$ of size $l$, there is a
  multiple-repair matrix $M\in\bbF_q^{n\times lt} $ with repair bandwidth at most $b$ for $I$. Then $\C$ admits an exact
  centralized repair scheme for $l$ failures with bandwidth $b$.
\end{theorem}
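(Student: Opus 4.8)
The plan is to mimic the single-erasure repair of Theorem~\ref{poll} but in the vectorized, multiple-failure setting, showing that the three defining properties of a multiple-repair matrix $M$ are exactly what is needed to recover all $l$ failed symbols simultaneously. Write $I=\{j_1,\ldots,j_l\}$ for the failed positions, and let $f\in\C$ be the codeword to be repaired, so the repair center knows $f(\vec{\alpha}_j)$ for all $j\notin I$ and must reconstruct $f(\vec{\alpha}_{j_1}),\ldots,f(\vec{\alpha}_{j_l})$. The idea is that each of the $lt$ columns $M[:,c]$ of $M$ is a dual codeword, so by the defining orthogonality $\sum_{j\in[n]}f(\vec{\alpha}_j)M[j,c]=0$, which rearranges to
\begin{equation*}
\sum_{j\in I}f(\vec{\alpha}_j)M[j,c]=-\sum_{j\notin I}f(\vec{\alpha}_j)M[j,c].
\end{equation*}
Applying the trace $\text{Tr}_{\bbF_q/\bbF_p}$ to both sides and using its $\bbF_p$-linearity gives one scalar equation in $\bbF_p$ for every column $c\in[lt]$, with the right-hand side being a quantity the center can compute from downloaded data.

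Next I would argue that the downloaded information meets the claimed bandwidth and that these $lt$ equations determine the failed symbols. For each surviving node $j\notin I$, the center needs the values $\text{Tr}(f(\vec{\alpha}_j)M[j,c])$ for all $c$; since $M[j,:]$ spans an $\bbF_p$-space of dimension $\dim_B(\text{set}(M[j,:]))$, it suffices to download exactly that many traces from node $j$, recovering the remaining traces as $\bbF_p$-linear combinations. Summing over $j\notin I$ yields total bandwidth $\sum_{j\notin I}\dim_B(\text{set}(M[j,:]))=b$ by property~3), which is the claimed cost. On the left-hand side, the unknowns are the $l$ field elements $f(\vec{\alpha}_{j_1}),\ldots,f(\vec{\alpha}_{j_l})$; writing each as its coordinate vector in an $\bbF_p$-basis of $\bbF_q$, the $l$ failed symbols correspond to $lt$ unknown scalars in $\bbF_p$, and the collected $lt$ trace equations form a linear system over $\bbF_p$ whose coefficient matrix is governed by $M[I,:]$.

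The crux is to show this $lt\times lt$ system is invertible over $\bbF_p$, and this is where property~2) enters. The linear map sending the coordinate vector $y\in\bbF_p^{lt}$ of the failed symbols to the vector of trace values on the left is precisely the map $y\mapsto\big(\text{Tr}(e_{j}^\top M[I,:]\,y)\big)$ built from $M[I,:]$; property~2) states that $M[I,:]\cdot y\neq 0$ for every nonzero $y\in\bbF_p^{lt}$, which together with the nondegeneracy of the trace bilinear form forces this map to be injective, hence an $\bbF_p$-isomorphism. I expect the main obstacle to be making this last equivalence clean: one must package the $l$ unknowns in $\bbF_q$ as a single $\bbF_p^{lt}$ vector and verify that ``$M[I,:]$ has full rank over $\bbF_p$'' in the trace sense is genuinely equivalent to the invertibility of the assembled trace-equation system. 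Once injectivity is established, solving the system recovers $y$, hence all $f(\vec{\alpha}_{j})$ for $j\in I$, completing the repair; since this argument is the same as in \cite[Theorem~1]{mardia2018repairing}, I would present it as a sketch and refer the reader there for the routine verifications.
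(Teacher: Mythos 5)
Your proposal is correct and follows essentially the same route as the paper's proof: use the $lt$ columns of $M$ as dual codewords, apply the trace to obtain $lt$ equations over $\bbF_p$ whose right-hand sides are computable from the surviving nodes within bandwidth $b$ by property 3), and invoke property 2) plus nondegeneracy of the trace to invert the resulting system, deferring the detailed verification to \cite[Theorem 1]{mardia2018repairing} exactly as the paper does. Your displayed formula for the map built from $M[I,:]$ is slightly garbled (the paper works with $\varphi:\bbF_q^l\to\bbF_p^{lt}$, $x\mapsto\left(\text{Tr}(\langle x,M[I,h]\rangle)\right)_{h\in[lt]}$, which is the trace-pairing adjoint of $y\mapsto M[I,:]\cdot y$), but this is precisely the step you flag and outsource to the reference, so the argument stands.
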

\begin{proof}
  Let $I\subseteq[n]$ be any set of $l$ failures. For a codeword $c=(c_1,\ldots,c_n)\in \C$, we need to recover $c_i$ for all $i\in I$. By the definition of $M$, each column is a codeword of $GRM^\bot(\mu,m)$, that is,
  $$\sum_{i=1}^nc_iM[i,h]=0,~~~\text{for all }h\in[lt].$$
  Apply the trace function on both sides, we get
  \begin{equation}\label{eq1}
  \sum_{i\in I}\text{Tr}(c_i\cdot M[i,h])=-\sum_{j\in [n]\backslash I}\text{Tr}(c_j\cdot M[j,h]), h\in [lt].
  \end{equation}
  By Definition~\ref{matr} 3), the right-hand side of (\ref{eq1}) can be obtained by downloading  symbols from the surviving nodes with bandwidth at most $b$. Write the left-hand side of  (\ref{eq1}) by $\text{Tr}(\langle c_I, M[I,h]\rangle)$, where $c_I$ denotes the restriction on $I$.

  To recover $c_I$, define the map $\varphi$: $\bbF_q^l~\rightarrow ~\bbF_p^{lt}$ where $\varphi (x)=(\text{Tr}(\langle x, M[I,1] \rangle),\cdots,\text{Tr}(\langle x, M[I,lt] \rangle))^\top$. Since we know $\varphi(c_I)$ from (\ref{eq1}), if $\varphi$ is invertible, then we are done. In fact, the property that $M[I,:]$ has full rank over $\bbF_p$ in Definition~\ref{matr} 2) guarantees that  $\varphi$ is invertible, see details of the proof in \cite[Theorem 1]{mardia2018repairing}.
\end{proof}

To keep the repair bandwidth low in Theorem \ref{matrix}, we need the dimensions of the elements in each row of the multiple-repair matrix as small as possible. We apply the idea from Theorem \ref{distri}, but modify the polynomial $p_{i,e,u}$ by moving $H_u$ into $L_{V_i}$ so that in each group of failed nodes, the corresponding row elements are in a space of dimension at most $t-\dim V_i$.

 \begin{theorem}\label{central}
  Let $\C$ be a $GRM(\mu,m)$ over $\bbF_q$  with $n=q^m$ and $q=p^t$. Suppose the set of failed nodes are partitioned into $w$ groups, each of  size $l_i$, $i\in [w]$ and has the same values except for the first coordinate. Then there is a centralized linear exact repair scheme  with repair bandwidth at most $\sum_{i=1}^w(q-l_i)(t-s_i)$ over $\mathbb{F}_p$, where $s_i=\left \lfloor\log_p\left(\frac{p^t+l_i-\mu-2}{2l_i-1}\right)\right \rfloor$.
\end{theorem}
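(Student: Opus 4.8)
The plan is to reduce everything to Theorem~\ref{matrix}: it suffices to produce, for the given partition of the failure set $I$ into the $w$ groups, a multiple-repair matrix $M\in\bbF_q^{n\times lt}$ whose rows off $I$ have small $\bbF_p$-dimension. I would index the $lt$ columns by triples $(i,e,u)$ with $i\in[w]$, $e\in[t]$, $u\in[l_i]$, giving one block of $l_i t$ columns per group. Writing the nodes of group $i$ as $(\alpha_{iu},\beta_{i2},\dots,\beta_{im})$ with the $\alpha_{iu}$ distinct, fixing a basis $\{\xi_1,\dots,\xi_t\}$, and setting $G_i(x_1)=\prod_{v=1}^{l_i}(x_1-\alpha_{iv})$, $H_u(x_1)=\prod_{v\neq u}(x_1-\alpha_{iv})$, I would take $V_i$ of dimension $s_i$ over $\bbF_p$ and \emph{fold} $H_u$ into the subspace map, defining
\[
q_{i,e,u}(\vec{x})=\prod_{j=2}^{m}\bigl(1-(x_j-\beta_{ij})^{q-1}\bigr)\,\frac{L_{V_i}\bigl(\xi_e\,G_i(x_1)H_u(x_1)\bigr)}{G_i(x_1)}.
\]
Since $L_{V_i}(y)=\sum_{k=0}^{s_i}c_ky^{p^k}$ with $c_0\neq0$, the numerator is divisible by its argument and hence by $G_i$, so $q_{i,e,u}$ is a genuine polynomial; its total degree is $(m-1)(q-1)+(2l_i-1)p^{s_i}-l_i$, and the stated choice $s_i=\lfloor\log_p\frac{p^t+l_i-\mu-2}{2l_i-1}\rfloor$ is \emph{exactly} the condition making this $\le m(q-1)-\mu-1=\mu^\bot$, so every column lies in $GRM^\bot(\mu,m)$ as Definition~\ref{matr}(1) demands. (Note that at $l_i=1$ this reduces to the single-erasure polynomial of Theorem~\ref{1eras}.)

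For Definition~\ref{matr}(2) I would read off the block-diagonal behaviour of $M[I,:]$. The leading factor kills every failed node outside group $i$, because distinct groups differ in some coordinate $j\ge2$. Within group $i$ I would use the expansion $\frac{L_{V_i}(\xi_e G_iH_u)}{G_i}=c_0\xi_eH_u+\sum_{k\ge1}c_k\xi_e^{p^k}G_i^{p^k-1}H_u^{p^k}$: evaluating at $x_1=\alpha_{iv}$ makes the $k\ge1$ terms vanish (as $p^k-1\ge1$ and $G_i(\alpha_{iv})=0$), leaving $c_0\xi_eH_u(\alpha_{iv})$, which is $0$ for $v\neq u$ and equals $c_0\xi_eH_u(\alpha_{iu})\neq0$ for $v=u$. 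Thus node $\vec{\alpha}_{iu}$ sees nonzero values only in the $t$ columns $(i,e,u)$, where they form an $\bbF_p$-basis of $\bbF_q$; hence $M[I,:]$ is block diagonal with full $\bbF_p$-rank.

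The bandwidth count is where folding $H_u$ inside $L_{V_i}$ pays off. A surviving node whose last $m-1$ coordinates match no group contributes $0$. For a surviving node $\vec{\alpha}'$ in the hyperplane of group $i$, with $\gamma=\alpha'_1\notin\{\alpha_{iu}\}$ so $G_i(\gamma)\neq0$, every entry equals $G_i(\gamma)^{-1}L_{V_i}(\xi_eG_i(\gamma)H_u(\gamma))$, which for \emph{all} $e$ and $u$ lies in the single $\bbF_p$-space $G_i(\gamma)^{-1}\,Im(L_{V_i})$ of dimension $t-s_i$; this collapse of all $l_it$ values into one $(t-s_i)$-dimensional space—rather than the $l_i$ separate images of the distributed scheme of Theorem~\ref{distri}—is precisely where the factor $l_i$ in the bandwidth is saved. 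As the $w$ hyperplanes are disjoint and each carries $q-l_i$ surviving nodes, summing yields bandwidth $\sum_{i=1}^{w}(q-l_i)(t-s_i)$, and Theorem~\ref{matrix} finishes the proof. I expect the main technical point to be getting the \emph{uniform} denominator right: dividing by $G_i$ (common to all $u$) rather than by the $u$-dependent factor $x_1-\alpha_{iu}$ is what forces the single-space collapse, and this in turn requires the extra $H_u$ in the argument—equivalently the double root $G_iH_u=(x_1-\alpha_{iu})H_u^2$—to preserve both divisibility by $G_i$ and the node-distinguishing property of Step~2, all while keeping the degree within the bound that defines $s_i$.
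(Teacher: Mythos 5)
Your proof is correct, and it follows the paper's overall framework---reduce to Theorem~\ref{matrix} by exhibiting a multiple-repair matrix whose columns are dual codewords of the form (hyperplane indicator $\prod_{j\ge 2}(1-(x_j-\beta_{ij})^{q-1})$) $\times$ (subspace polynomial evaluated at a multiple of the common denominator, then divided by it)---but your within-group columns are genuinely different from the paper's. The paper feeds the monomials $x_1^{u-1}$ into $L_{V_i}$, i.e.\ uses $L_{V_i}\bigl(\xi_e H_i(\vec{x})x_1^{u-1}\bigr)/H_i(\vec{x})$ with $H_i$ equal to your $G_i$; evaluation at the failed nodes then yields the Vandermonde-type values $c_0\xi_e\alpha_{iv}^{u-1}$, and the full-rank condition of Definition~\ref{matr} is verified by a root-counting argument: a nonzero polynomial $f(x)=\sum_{u}c_0x^{u-1}\langle\vec{\xi},y^{(i,u)}\rangle$ of degree at most $l_i-1$ cannot vanish at the $l_i$ distinct points $\alpha_{iv}$. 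You instead feed in the Lagrange-type factors $H_u=\prod_{v\ne u}(x_1-\alpha_{iv})$, which makes $M[I,:]$ literally block diagonal---node $\vec{\alpha}_{iu}$ sees nonzero entries only in its own $t$ columns, where they form a scaled basis---so the rank verification becomes immediate and no root-counting is needed; this choice also exhibits the centralized construction as a direct deformation of the distributed polynomials of Lemma~\ref{lem1} and Theorem~\ref{distri}. The two choices consume exactly the same degree budget (your argument $G_iH_u$ has degree $2l_i-1$ uniformly in $u$, matching the paper's worst case $u=l_i$), so both lead to the same $s_i$ and the same bandwidth $\sum_{i=1}^w(q-l_i)(t-s_i)$; and your diagnosis that dividing by the $u$-independent denominator $G_i$, rather than by $x_1-\alpha_{iu}$, is what collapses all $l_i t$ row entries of a surviving node into the single space $G_i(\gamma)^{-1}\mathrm{Im}(L_{V_i})$ of dimension $t-s_i$ is exactly the mechanism the paper's proof exploits.
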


\begin{proof}
 Let $I_i$ be the $i$th group of failed nodes with $|I_i|=l_i$, $i\in [w]$.  Denote $I=\cup I_i$ and $|I|=l=\sum l_i$.  Suppose each node in $I_i$ has the form $(\alpha_{iv},\beta_{i2},\cdots,\beta_{im})$, where $\beta_{ik}$, $k\in[2,m]$ are fixed symbols in $\bbF_q$, and $\alpha_{iv}$, $v\in [l_i]$ are all distinct. Let $V_i$ be a subspace of dimension $s_i$, $i\in [w]$.

  For the $i$th group, define $H_i(\vec{x})=\prod_{v\in[l_i]}(x_1-\alpha_{iv})$. Let $\{\xi_1,\xi_2,\cdots,\xi_t\}$ be a basis of $\mathbb{F}_q$ over $\mathbb{F}_p$. Then for each $u\in[l_i]$ and $e\in [t]$, define
  $$p_{i,e,u}(\vec{x})=\prod_{j=2}^m\left(1-(x_j-\beta_{ij})^{q-1}\right)\frac{L_{V_i}(\xi_eH_i(\vec{x})x_1^{u-1})}{H_i(\vec{x})}.$$
 Since $s_i=\left \lfloor\log_p\left(\frac{p^t+l_i-\mu-2}{2l_i-1}\right)\right \rfloor$, we have $(m-1)(q-1)+p^s(2l_i-1)-l_i\leq m(q-1)-\mu-1$, and hence  $p_{i,e,u}\in GRM^\bot(\mu,m)$.

  Now we construct a multiple-repair matrix $M\in\bbF_q^{n\times lt}$ for the set $I$. We index the rows of $M$ by $j\in[n]$, corresponding to a vector  in $\bbF_q^m$. Let $J_i=\{i\}\times[l_i]$, $i\in [w]$ and $J=\cup J_i$. Let $\vec{\xi}=(\xi_1,\xi_2,\cdots,\xi_t)$. Then  the columns of $M$ are indexed by $(\xi_e, i,u)$,  $e\in[t]$, $i\in [w]$, $u\in [l_i]$ with the order $(\vec{\xi}, J_i)$, $i\in [w]$ defined as follows.
Here $(\vec{\xi}, J_i)$ is a concatenation of sequences $\vec{\xi} \times (i,1), \vec{\xi}\times(i,2),\cdots,\vec{\xi}\times(i,l_i)$, where $\vec{\xi} \times (i,u)=\{(\xi_1,i,u),(\xi_2,i,u),\ldots, (\xi_t,i,u)\}$ for each $u\in [l_i]$. See Fig. \ref{picture} about a picture of the column indices.

Now define each entry $M{[j,(\xi_e,i,u)]}=p_{i,e,u}(\vec{\alpha}_j)$. It's obvious that the columns of $M$ are codewords in the dual code $GRM^\bot(\mu,m)$. We only need to show that $M[I,:]$ has full rank over $\bbF_p$, that is,  for all nonzero $y\in \bbF_p^{lt}$, $M[I,:]\cdot y\neq 0.$

   Given any node $\vec{\gamma}=(\gamma_1,\gamma_2,\cdots,\gamma_m)$, if for all $k\in[2,m]$, $\gamma_k=\beta_{ik}$, then $\prod_{j=2}^m\left[1-(\gamma_j-\beta_{ij})^{q-1}\right]=1$, otherwise $p_{i,e,u}(\vec{\gamma})=0.$ Since  each node $\vec{\alpha}$  in $I_i$ has the form $(\alpha_{iv},\beta_{i2},\cdots,\beta_{im})$, $v\in[l_i]$,   we have
   \begin{equation*}
     \begin{split}
       p_{i,e,u}(\vec{\alpha}) &= \frac{L_{V_i}(\xi_eH_i(\vec{\alpha})x_1^{u-1})}{H_i(\vec{\alpha})}\\
         &=c_0\xi_ex_1^{u-1}+\sum_{j=1}^{s_i}c_j\xi_e^{p^j}H_i(\vec{\alpha})^{p^j-1}x_1^{(u-1)p^j}\\&=c_0\xi_e\alpha_{iv}^{u-1}.
     \end{split}
   \end{equation*}

\begin{figure*}[h]
\centering
\includegraphics[scale=0.8]{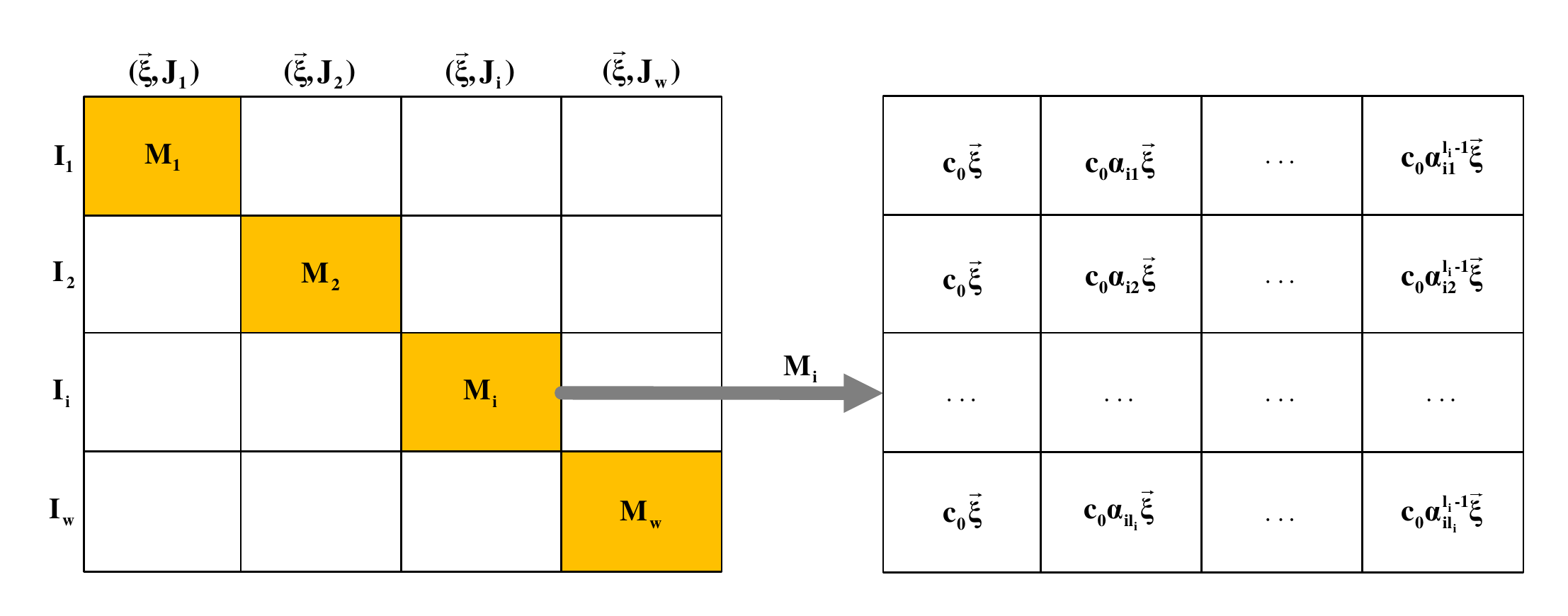}
\caption{$M_i$ represent a submatrix of $M[I,:]$ as depicted on the right side,  the blank block in the left matrix is zero, and $x\vec{\xi}$ is the vector $(x\xi_1,x\xi_2,\cdots,x\xi_t)$.}\label{picture}
\end{figure*}
See Fig. \ref{picture} about a picture of the entries of  $M$ restricted on the rows from $I$. For any nonzero $ y\in\bbF_p^{lt}$, index the coordinates of $y$ by the column indices of $M$, and write $y=(y^{(1,1)},\cdots,y^{(1,l_1)},\cdots,y^{(w,1)},\cdots,y^{(w,l_w)})$, where $y^{(i,u)}\in \bbF_p^t$. Then there exists at least one $(i,r)$ with $i\in [w]$ and $r\in [l_i]$, such that $\langle\vec{\xi},y^{(i,r)}\rangle\neq 0$. For the submatrix $M[I_i,:]$, we have

   $$M[I_i,:]\cdot y=
  \left[\begin{array}{c}
  \sum_{u=1}^{l_i}c_0\alpha_{i1}^{u-1}\langle\vec{\xi},y^{(i,u)}\rangle\\
  \sum_{u=1}^{l_i}c_0\alpha_{i2}^{u-1}\langle\vec{\xi},y^{(i,u)}\rangle\\
  \vdots\\
  \sum_{u=1}^{l_i}c_0\alpha_{il_i}^{u-1}\langle\vec{\xi},y^{(i,u)}\rangle
\end{array}\right].$$
Consider the polynomial $f(x)=\sum_{u=1}^{l_i}c_0x^{u-1}\langle\vec{\xi},y^{(i,u)}\rangle$. Since $\langle\vec{\xi},y^{(i,r)}\rangle\neq 0$, we know that $f(x)\neq 0$, then
   $$M[I_i,:]\cdot y=
  \left[\begin{array}{c}
  f(\alpha_{i1})\\
  f(\alpha_{i2})\\
  \vdots\\
  f(\alpha_{il_i})
\end{array}\right].$$
If $M[I_i,:]\cdot y=0$, then $f(x)$ has $l_i$ distinct roots $\alpha_{i1},\ldots,\alpha_{il_i}$, which contradicts to the fact that $\deg f(x)\leq l_i-1$. Hence $$M[I_i,:]\cdot y\neq 0,$$ and consequently $M[I,:]\cdot y\neq 0$. Thus we complete the proof that $M$ is a multiple-repair matrix.

Finally, we compute the repair bandwidth.  For $i\in [w]$, and a row $j\notin I$, with the corresponding node $\vec{\gamma}$, we have
\begin{equation*}
    \{M_{jk}:k\in(\vec{{\xi}},J_i)\} =\{p_{i,e,u}(\vec{\gamma}):e\in[t],u\in [l_i]\}\subseteq\frac{\text{Im}(L_{V_i})}{H_i(\vec{\gamma})},
\end{equation*}
which has dimension at most dim$_{\mathbb{F}_p}\left(\frac{\text{Im}(L_{V_i})}{H_i(\vec{\gamma})}\right)=\text{dim}_{\mathbb{F}_p}(\text{Im}(L_{V_i}))=t-s_i$. In particular, if $\vec{\gamma}=(\gamma_1,\gamma_2,\cdots,\gamma_m)$ satisfies $\gamma_k\neq \beta_{ik}$ for some $k\in [2,m]$, then the above set is of dimension $0$. So there are $q-l_i$ such rows $j$ with nonzero dimension. Now summing the dimensions over $i\in [w]$, we have the total
 total bandwidth of $M$ is at most $\sum_{i=1}^w(q-l_i)(t-s_i)$.
\end{proof}

We mention that Theorem~\ref{distri} also provides a multiple-repair matrix similar to the shape of Fig~\ref{picture}, but with each $M_i$ a sparse matrix such that different rows have  disjoint supports.

\begin{example}\label{ex2}
 Consider the $GRM(4,3)$ over $\bbF_{2^4}$ and $|I|=5$.  Let $1   ,\xi,\xi^2,\xi^3$  be a basis of $\bbF_{2^4}$ over $\bbF_2$. If the five erasures are $\alpha_1=(0,0,0)$, $\alpha_2=(1,0,0)$, $\alpha_3=(0,\xi,\xi)$, $\alpha_4=(\xi,\xi,\xi)$, $\alpha_5=(\xi,1,1)$, then they can be partitioned into three groups. The group $1$ contains $\alpha_1=(0,0,0)$, $\alpha_2=(1,0,0)$, group $2$ contains $\alpha_3=(0,\xi,\xi)$, $\alpha_4=(\xi,\xi,\xi)$, and group $3$ contains $\alpha_5=(\xi,1,1)$. So $l_1=l_2=2$ and $l_3=1$. If we use the distributed model in Theorem \ref{distri}, then $s_1=s_2=\left \lfloor\log_2\left(16-4-2\right)\right \rfloor=3$ and $s_3=\left \lfloor\log_2\left(16-4-1\right)\right \rfloor=3$, the repair bandwidth is
  $$4\times(16-2)(4-3)+(16-1)(4-3)=71.$$
  over $\bbF_2$.

  If we use the centralized model Theorem \ref{central}, then $s_1=s_2=\left \lfloor\log_2\left(\frac{16+2-4-2}{3}\right)\right \rfloor=2$ and $s_3=\left \lfloor\log_2\left(\frac{16+1-4-2}{1}\right)\right \rfloor=3$, the repair bandwidth is
  $$2\times (16-2)(4-2)+(16-1)(4-3)=43.$$
  over $\bbF_2$, which is better than the distributed model.
\end{example}

\begin{remark}\label{rmk2}
 Let $\mathcal{C}$ be a $GRM(\mu,m)$ over $\mathbb{F}_q$, where $q=p^t$. Then there exists an integer $s$ such that $p^t-p^{s+1}\leq \mu \leq p^t-p^s-1$. Similar to Remark \ref{rmk},  we suppose the number of failures $l$ in Theorem~\ref{central} satisfies that $1\leq l\leq \frac{p^t+p^s-\mu-2}{2p^s-1}$. Then for any $1\leq l_i\leq l$, we have $s_i=\left \lfloor\log_p\left(\frac{p^t+l_i-\mu-2}{2l_i-1}\right)\right \rfloor=s$. So the repair bandwidth is  at most $(t-s)\sum_{i=1}^w(q-l_i)$ over $\mathbb{F}_p$. In particular,
  \begin{itemize}
    \item[a)] When $l$ erasures are divided into $l$ groups, it has the worst repair bandwidth which is $l(q-1)(t-s).$
    \item[b)] When $l$ erasures belong to the same group, \ie $w=1$, it corresponds to the minimal repair bandwidth which is $(q-l)(t-s).$
  \end{itemize}
\end{remark}

\begin{remark}\label{rmk3} It is easy to check that if $\mu$ satisfies that $p^t-p^{s+1}\leq \mu \leq p^t-p^s-1$ for some integer $s$, and $1\leq l\leq \frac{p^t+p^s-\mu-2}{2p^s-1}$, then $s_i=\left \lfloor\log_p\left(p^t-\mu-l_i\right)\right \rfloor=\left \lfloor\log_p\left(\frac{p^t+l_i-\mu-2}{2l_i-1}\right)\right \rfloor=s$ for both Theorems~\ref{distri} and~\ref{central}. In this case, the  minimum hamming distance of $GRM(\mu,m)$, $d=(q-\mu)q^{m-1}$ which is much bigger than $l$. Then by Theorem \ref{subfieldcode}, these $l$ erasures can be recovered by accessing any $k$ surviving nodes with repair bandwidth at most $kt$, where $k=n-d+1$.
The comparison of these three repair schemes is illustrated in the Fig \ref{picture1}.
\end{remark}
\begin{figure*}[h]
\centering
\includegraphics[scale=0.5]{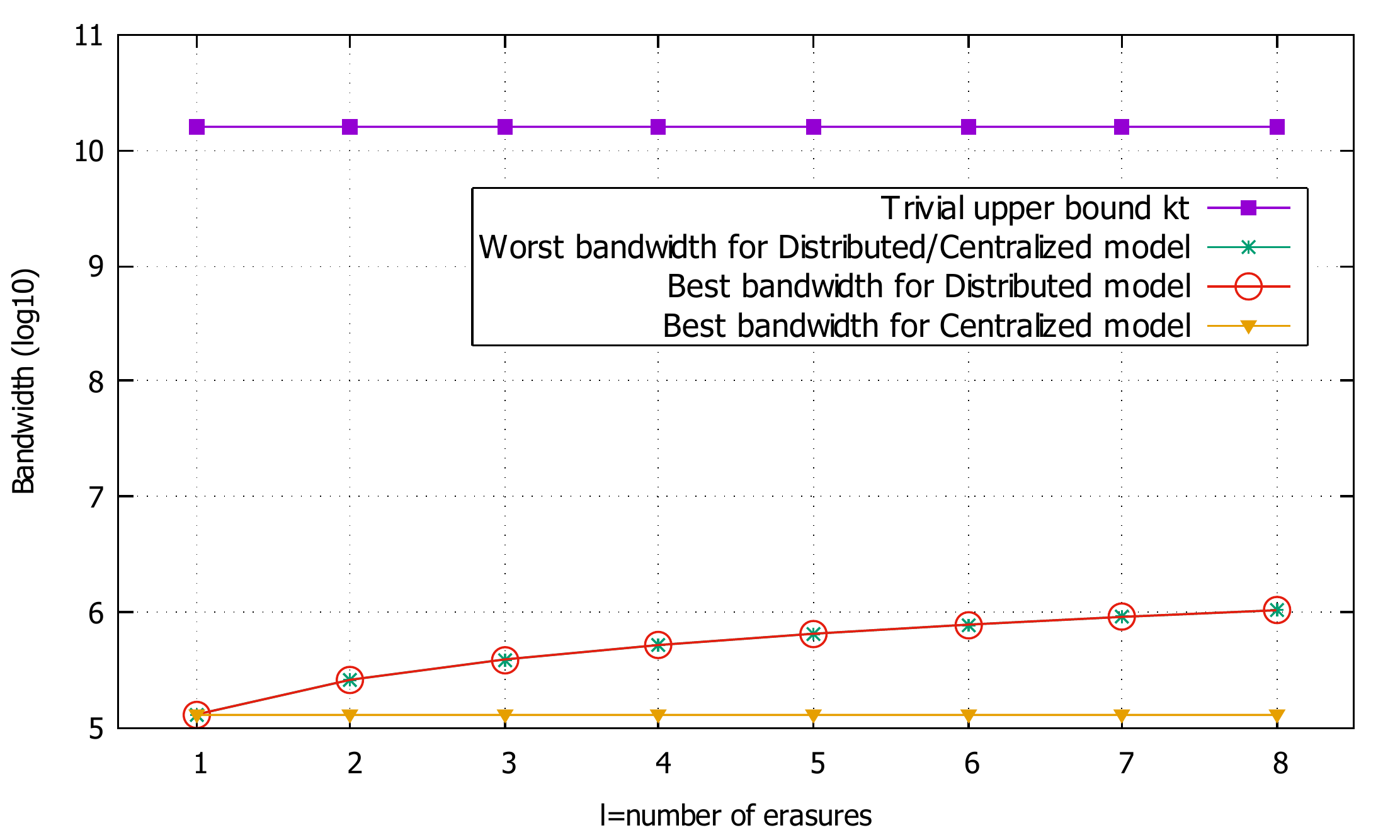}
\caption{For a GRM code over $\bbF_q$ with $t=4,p=2^4,q=p^t=16^4,\mu=16^4-16^3,m=2,$ then $s=2,d=16^7,k=16^8-16^7+1$. And then for $1\leq l\leq 8$, $l$ satisfies $l\leq d-1$, there is a trivial upper bound $kt$. Further, $l$ also satisfies the conditions of Remarks~\ref{rmk2} and~\ref{rmk3}. For the worst case, the upper bound for both distributed and centralized model is $l(q-1)(t-s)$. And for the best case, the upper bound for distributed and centralized model are $l(q-l)(t-s)$ and $(q-l)(t-s)$ respectively. The y-axis is in logarithmic scale.}\label{picture1}
\end{figure*}

\subsection{The expected repair bandwidth for $l$ failures}
In this section, we compute the average repair bandwidth for $l$ failures when $l$ is small, for both the distributed and centralized model.

When $l$ nodes fail, the repair bandwidth depends on the partitions of the failed nodes  described in Theorems~\ref{distri} and~\ref{central}.  Let $A$ be the event that the $l$ erasures are partitioned into $w$ groups, with $g_i$ groups of size $l_i$,  $i\in[\nu]$, such that $\sum_{i=1}^\nu g_i=w$ and $\sum_{i=1}^\nu g_il_i=l$. Then the probability of event $A$ is
\begin{equation}\label{equation}
    P(A)=\frac{{q^{m-1}\choose w}\times w!\times \prod_{i=1}^\nu \tbinom{q}{l_i} ^{g_i} }{g_1!g_2! \cdots g_\nu !\tbinom{q^m}{l}}.
\end{equation}

Now we compute the expected bandwidth for $l=2$ or $3$. For convenience, let $P_w$ be the probability of the event that the $l$ erasures are partitioned into $w$ groups, $w\in[l]$, and let $b_w$ be the corresponding repair bandwidth.

For the distributed model, suppose that there exists an integer $s$ such that $p^t-p^{s+1}\leq \mu \leq p^t-p^s-4$. Then $3\leq p^t-p^s-\mu-1\leq p^{s+1}-p^s-1$, which means that for any $1\leq l_i\leq 3$, we have $s_i=\left \lfloor\log_p\left(p^t-\mu-l_i\right)\right \rfloor=s$.
\begin{itemize}
  \item[1)] $l=2$. The integer $2$ has two partitions, $2$ and $1+1$, which corresponds to the cases $\{g_1=1,l_1=2,b_1=2(q-2)(t-s)\}$ and $\{g_1=2,l_1=1,b_2=2(q-1)(t-s)\}$, respectively. By Eq. (\ref{equation}), we get that
  \begin{equation*}
    P_1=\frac{\tbinom{q^{m-1}}{1}\times \tbinom{q}{2} }{\tbinom{q^m}{2}},\text{ and }P_2=\frac{\tbinom{q^{m-1}}{2}\times 2!\times \tbinom{q}{1} ^2 }{2! \tbinom{q^m}{2}}.
  \end{equation*}
  Then the expectation of the repair bandwidth is $P_1b_1+P_2b_2=2(q-1)(t-s)\frac{q^m-2}{q^m-1}.$
  \item[2)] $l=3$. The integer $3$ has three partitions, $3$, $1+2$ and $1+1+1$, which corresponds to $\{g_1=1,l_1=3,b_1=3(q-3)(t-s)\}$, $\{g_1=g_2=1,l_1=1,l_2=2,b_2=(q-1)(t-s)+2(q-2)(t-s)\}$ and $\{g_1=3,l_1=1,b_3=3(q-1)(t-s)\}$, respectively. By Eq. (\ref{equation}), we get that
  \begin{equation*}
    P_1=\frac{\tbinom{q^{m-1}}{3}\times 3! \times  \tbinom{q}{1} ^3 }{3!\times\tbinom{q^m}{3}},~~P_2=\frac{\tbinom{q^{m-1}}{2}\times 2!\times\tbinom{q}{1}\tbinom{q}{2} }{\tbinom{q^m}{3}}, \text{ and }P_3=\frac{\tbinom{q^{m-1}}{1}\times\tbinom{q}{3} }{\tbinom{q^m}{3}}.
  \end{equation*}
  So the expectation of the repair bandwidth is
  \begin{equation*}
    P_1b_1+P_2b_2+P_3b_3=3(q-1)(t-s)\left(\frac{q^m-4}{q^m-2}+\frac{2}{(q^m-1)(q^m-2)}\right).
  \end{equation*}

\end{itemize}

For the centralized model, suppose that there exists an integer $s$ such that $p^t-p^{s+1}\leq \mu \leq p^t-5p^s+1$. Then  $3\leq \frac{p^t+p^s-\mu-2}{2p^s-1}$, which means that for any $1\leq l_i\leq 3$, we have $s_i=\left \lfloor\log_p\left(\frac{p^t+l_i-\mu-2}{2l_i-1}\right)\right \rfloor=s$. Similar to computation for the distributed model with $l=2$ or $3$, we only need to compute each $b_i$.
\begin{itemize}
  \item[1)] $l=2$. We have $b_1=(q-2)(t-s)$ and $b_2=2(q-1)(t-s)$. Then the expectation of the repair bandwidth is
  \begin{equation*}
    P_1b_1+P_2b_2=2(q-1)(t-s)\frac{2q^m-q-2}{2(q^m-1)}.
  \end{equation*}
  \item[2)] $l=3$. We have $b_1=(q-3)(t-s)$, $b_2=(q-1)(t-s)+(q-2)(t-s)$ and $b_3=3(q-1)(t-s)$. Then the expectation of the repair bandwidth is
  \begin{equation*}
    P_1b_1+P_2b_2+P_3b_3=3(q-1)(t-s)\left(\frac{(q^m-q)(q^m-3)}{(q^m-1)(q^m-2)}+\frac{(q-2)^2}{3(q^m-1)(q^m-2)}\right).
  \end{equation*}
\end{itemize}

From the above analysis, we find that either in the distributed model or in the centralized model, the expectation of repair bandwidth tends to the $b_l$.

\section{Conclusion}\label{con}

We proposed repair schemes for the recovery of one or multiple erasures for generalized Reed-Muller codes. For single erasure, our construction gives a scheme with repair bandwidth close to the lower bound when the subfield is small.  For multiple erasures, we provided two schemes, one is distributed and the other is centralized. The distributed scheme has larger bandwidth but suitable for more code parameters comparing to centralized scheme. For both models, we analyse the average and worst bandwidth when the number of failures is small.
Several open questions remain, including the problem of establishing  lower bounds on the repair bandwidth for an arbitrary number of
erasures for both the distributed and centralized model, and
developing repair schemes that meet the bounds.

%

\vskip 10pt
\bibliographystyle{IEEEtran}
\bibliography{reference}


%

\end{document}